\def\argmin{\operatornamewithlimits{arg\,min}}
\newcommand{\model}{{\cal M}}
\newcommand{\lone}{{ \ell}_1}
\newcommand{\lqreg}{{ \ell}_q}
\newcommand{\vect}[1] {\mbox{${\bf #1}$}}
\newcommand{\reel}{{\mathbb R}}
\newcommand{\Data}{{\bf y}}
\newcommand{\state}{{\bm \theta}}
\newcommand{\State}{{\bm \theta}}
\newcommand{\Normal}{{\cal N}}
\newcommand{\ComSpace}{E}
\newcommand{\Expec}{\mathbb{E}}
\newcommand{\Var}{\mathbb{V}\text{ar}}
\newcommand{\ESS}{\mathbb{ESS}}
\newcommand{\CESS}{\mathbb{CESS}}
\newcommand{\Kernel}{{\cal K}}
\newcommand{\BackKernel}{{\cal L}}
\newcommand{\Target}{{\pi}}
\newcommand{\CoolSchedule}{\phi}
\newcommand{\UnNorTarget}{{\gamma}}
\newcommand{\NormConst}{Z}
\newcommand{\PropDist}{\eta}
\newcommand{\PostDist}{p}
\newcommand{\TestFunction}{\varphi}
\newcommand{\NbIterSMC}{T}
\newcommand{\NbParticles}{N}
\newcommand{\ISWeight}{W}
\newcommand{\NormISWeight}{\widetilde{W}}
\newcommand{\IncreStep}{\varrho}
\newcommand{\IncreWeight}{{{w}}}
\newcommand{\UnNorIncreWeight}{{w}}
\newcommand{\ParametricParameter}{\tau}
\newcommand{\DimObs}{{n_{\Data}}}
\newcommand{\DimState}{{n_{\State}}}
\newcommand{\OneData}{{ y}}
\newcommand{\AcceptRatio}{\alpha}
\newcommand{\MainTarget}{\pi}
\newcommand{\Uniform}{{\cal U}}
\newcommand{\ParsIndexSMC}{m}
\newcommand{\NbMCMCmove}{N_{\text{MCMC}}}
\newcommand{\Block}{{\boldsymbol\varrho}}
\newcommand{\TimeSet}{{\cal N}}
\newcommand{\Coefficient}{{\boldsymbol \beta}}
\newcommand{\OneCoefficient}{{\beta}}
\newcommand{\NbCoefficient}{p}
\newcommand{\OneMeanY}{\mu}
\newcommand{\Covariates}{{\vect{x}}}
\newcommand{\DesignMatrix}{{\boldsymbol \Phi}}
\newcommand{\designmatrix}{{\Phi}}
\newcommand{\RowMatrix}{i}
\newcommand{\ColumnMatrix}{j}
\definecolor{orange}{rgb}{1,0.5,0}
\definecolor{violet}{rgb}{0.53, 0.0, 0.69}
\definecolor{pakistangreen}{rgb}{0.0, 0.4, 0.0}
\def\cleardoublepage{\clearpage\if@twoside \ifodd\c@page\else%
  \hbox{}%
  \thispagestyle{empty}%              % Empty header styles
  \newpage%
  \if@twocolumn\hbox{}\newpage\fi\fi\fi}
\renewcommand{\epsilon}{\varepsilon}
\newcommand{\iid}{\stackrel{ \mathrm{iid}}{\sim}}
\newcommand{\StateT}{\widetilde{\State}}
\newcommand{\WeightESS}{{ \textnormal{\tiny \textsc{ESS}}}}
\newcommand{\WeightDemix}{{ \textnormal{\tiny DeMix}}}
\newcommand{\TimeIndex}{t}
\renewcommand{\ParametricParameter}{\gamma}
\newcommand{\Correct}[1]{{{#1}}}
\newtheorem{proposition}{\noindent \textbf{Proposition}}
\begin{document}
\title{Efficient Sequential Monte-Carlo Samplers for Bayesian Inference}
\author{Thi Le Thu Nguyen$^{1}$, Fran\c{c}ois~Septier$^{1}$, Gareth W.~Peters$^{2}$ and Yves Delignon$^{1}$
\begin{center}
{\footnotesize {\ 
\textit{
$^{1}$ Institut Mines-T\'el\'ecom / T\'el\'ecom Lille / CRIStAL UMR CNRS 9189,  Villeneuve d'ascq, France.\\
$^{2}$ Department of Statistical Sciences, University College London (UCL), London, England. \\
} } }
\end{center}
}

\maketitle

\vspace{-2cm} 
%%%%%%%%%%%%%%%%%%%%%%%%%%%%%%%%%%%%%%%%%%%%%%%%%%%%%%%%%%%%%%%%%%%%%%%%%%
%%%%%%%%%%%%%%%%%%%%%%%%%%%%%%%%%%%%%%%%%%%%%%%%%%%%%%%%%%%%%%%%%%%%%%%%%%
\begin{abstract}
%%%%%%%%%%%%%%%%%%%%%%%%%%%%%%%%%%%%%%%%%%%%%%%%%%%%%%%%%%%%%%%%%%%%%%%%%%
%
In many problems, complex non-Gaussian and/or nonlinear models are required to accurately describe a physical system of interest. In such cases, Monte Carlo algorithms are remarkably flexible and extremely powerful approaches to solve such inference problems. However, in the presence of a high-dimensional and/or multimodal posterior distribution, it is widely documented that standard Monte-Carlo techniques could lead to poor performance. In this paper, the study is focused on a Sequential Monte-Carlo (SMC) sampler framework, a more robust and efficient Monte Carlo algorithm. Although this approach presents many advantages over traditional Monte-Carlo methods, the potential of this emergent technique is however largely underexploited in signal processing. In this work, we aim at proposing some novel strategies that will improve the efficiency and facilitate practical implementation of the SMC sampler specifically for signal processing applications. Firstly, we propose an automatic and adaptive strategy that selects the sequence of distributions within the SMC sampler that minimizes the asymptotic variance of the estimator of the posterior normalization constant. This is critical for performing model selection in modelling applications in Bayesian signal processing. The second original contribution we present improves the global efficiency of the SMC sampler by introducing a novel correction mechanism that allows the use of the particles generated through all the iterations of the algorithm (instead of only particles from the last iteration). This is a significant contribution as it removes the need to discard a large portion of the samples obtained, as is standard in standard SMC methods. This will improve estimation performance in practical settings where computational budget is important to consider.
\newline
\textbf{Keywords: } Bayesian inference, Sequential Monte Carlo sampler, complex models.

\end{abstract}

%%%%%%%%%%%%%%%%%%%%%%%%%%%%%%%%%%%%%%%%%%%%%%%%%%%%%%%%%%%%%%%%%%%%%%%%%%
%%%%%%%%%%%%%%%%%%%%%%%%%%%%%%%%%%%%%%%%%%%%%%%%%%%%%%%%%%%%%%%%%%%%%%%%%%
\section{Introduction} \label{Introduction}
%%%%%%%%%%%%%%%%%%%%%%%%%%%%%%%%%%%%%%%%%%%%%%%%%%%%%%%%%%%%%%%%%%%%%%%%%%
%%%%%%%%%%%%%%%%%%%%%%%%%%%%%%%%%%%%%%%%%%%%%%%%%%%%%%%%%%%%%%%%%%%%%%%%%%
Bayesian inference is an important area of signal processing, relevant to a wide range of real-world applications. As opposed to the point estimators (means, variances) used by classical statistics, Bayesian statistics is concerned with generating the posterior distribution of the unknown parameters given both the data and some prior density for these parameters. As such, Bayesian statistics provides a much more complete picture of the uncertainty in the estimation of the unknown parameters of a model.
		
		In Bayesian inference, the model parameters are regarded as random variables, and the main object of interest is the posterior distribution, i.e. the distribution of the parameters given the data. Specifically, the posterior density is defined via Bayes' Theorem as the normalized product of the prior density and the likelihood
		\begin{equation}\label{postdis}
			\PostDist(\State|\Data)=\dfrac{p(\Data|\State)p(\State)}{\int_{\ComSpace} p(\Data|\State)p(\State) d\State}  \varpropto p(\Data|\State)p(\State) ,
		\end{equation}
		where $\ComSpace$ denotes the parameter space of $\State$. From equation (\ref{postdis}) it is clear that $\PostDist(\State|\Data)$   involves a contribution from the observed data through $p(\Data|\State)$, and a contribution from prior information quantified through $p(\State)$. The posterior $\PostDist(\State|\Data)$ contains all relevant information on the unknown parameters $\State$ given the observed data $\Data$. All statistical inference can be deduced from the posterior distribution, through appropriate choice of summaries. This typically takes the form of evaluating integrals such as,
	\begin{equation}
		J=\int \varphi(\State) \PostDist(\State|\Data) d\State ,
	\end{equation}
	for some integrable function $\varphi(\State)$ with respect to the posterior distribution. For example, point estimates for unknown parameters are given by the posterior means, i.e., $\varphi(\State)=\State;$ prediction for future data $\tilde{\Data}$ is based on the posterior predictive distribution $p(\tilde{\Data}|\State,\Data)$, both of which are easily expressed as integral functionals of the posterior.
	
	There are several potential difficulties in any practical implementation of a Bayesian method. One of them is the issue of specifying the prior distribution. However, extra difficulties arise in actually calculating the various quantities required.  First, in applying Bayes's theorem we need to compute the integral in the denominator. Secondly, the process of inference may require the calculation of further integrals of other operations on the posterior distribution. These calculations may be difficult to perform in practice, especially in complex problems involving a high-dimensional and/or multimodal posterior distribution. These integrals are typically approximated using Monte Carlo methods, requiring the ability to sample from general probability distributions which can typically be only evaluated up to a normalizing constant.

\subsection{Existing works}		
	
	In many cases, using standard sampling techniques such as inversion or rejection to sample from a target distribution (\Correct{i.e.} posterior distribution) is either not possible or can prove to be too much of a computational burden. This has led to the development in recent years of much more advanced algorithms which allow one to obtain the required samples from the target distribution. Standard approaches are mostly based on Markov chain Monte Carlo (MCMC), where the equilibrium distribution of the chain is the target distribution and its ergodic mean converges to the expected value \cite{Robert2004}. MCMC algorithms have been applied with success to many problems, e.g. \cite{septier2011,Djuric2002,Dobigeon_IEEE_SP_Mag_2014,Doucet2005SignalPro}. However, there are two major drawbacks with MCMC methods.  Firstly, it is difficult to assess when the Markov chain has reached its stationary regime of interest. Secondly, if the target distribution is highly multi-modal, MCMC algorithms can easily become trapped in local modes.
	
	 In recent years, more robust and efficient Monte Carlo algorithms have been established in order to efficiently explore high dimensional and multimodal spaces. Many of them are population based, in that they deal explicitly with a collection of samples at each iteration, including  population-based MCMC \cite{LiangWong2001,Jasra2007} and sequential Monte-Carlo samplers \cite{Peters:2005ti,DelMoral2006,peters2012sequential}. In \cite{Jasra2007}, the authors provide a detailed review as well as several illustrations showing that such population strategies can lead to significant improvement compared to standard MCMC techniques.
	
	Population-based MCMC was originally developed by Geyer \cite{Geyer1991}. Further advances came with an evolutionary Monte Carlo algorithm in \cite{Liang2000} and \cite{LiangWong2001}, each of which attempted to produce genetic algorithm type moves to improve the mixing of the Markov chain. It works by simulating a population of several Markov chains with different invariant distributions in parallel using MCMC. The population is updated by mutation (Metropolis update in one single chain), crossover (partial state swapping between different chains), and exchange operators (full state swapping between different chains). However, like standard MCMC, this population-based MCMC algorithm still suffers from the difficulty of assessing when the Markov chains have reached their stationary regime. 
	
	The second population-based simulation approach is the sequential Monte Carlo sampler proposed in \cite{Peters:2005ti,DelMoral2006}. Sequential Monte Carlo (SMC) methods is a class of sampling algorithms which combine importance sampling and resampling. They have been primarily used in the ``particle filter'' setting to solve optimal filtering problems; see, for example, \cite{Cappe2007} and \cite{Doucet2009} for recent reviews. In this context, SMC methods/particle filters have enjoyed wide-spread use in various applications (tracking, computer vision, digital communications) due to the fact that they provide a simple way of approximating complex filtering distribution sequentially in time. But in \cite{Peters:2005ti,DelMoral2006,peters2012sequential}, the authors developed  a general framework that allows SMC to be used to simulate from a single and static target distribution, thus becoming a promising alternative to standard MCMC methods.  The SMC sampler framework involves the construction of  a sequence \Correct{of} artificial distributions on spaces of increasing dimensions which admit the distributions of interests as particular marginals. The mechanism is similar to sequential importance sampling (resampling) (\cite{Lui2008} and \cite{Doucet2001}), with one of the crucial differences being the framework under which the particles are allowed to move, resulting in differences in the calculation of the weights of the particles. 
	
	These methods have several advantages over population-based MCMC methods. Firstly, unlike MCMC, SMC methods do not require any burn-in period and do not face the sometimes contentious issue of diagnosing convergence of a Markov chain. Secondly, as discussed in \cite{Jasra2007}, compared to population-based MCMC, SMC samplers are  a richer class of methods since there is substantially more freedom in specifying  the mutation kernels in SMC: kernels \Correct{do not need} to be reversible or even Markov (and hence can be time adaptive). Finally, unlike MCMC, SMC samplers provide an unbiased estimate of the unknown normalizing constant of the posterior distribution whatever the number of particles used \cite{DelMoral2000}.% which can be one quantity of interest in the inference problem to deal with.
	
\subsection{Contributions}	
	
	Although this approach presents many advantages over traditional  MCMC methods, the potential of these emergent techniques is however largely underexploited in signal processing. In this paper, we therefore focus our study on this technique by aiming at proposing some novel strategies that will improve the efficiency and facilitate practical implementation of the SMC sampler. More specifically, we firstly derive some convergence results of the SMC sampler for some specific choice of the backward kernel, which is generally used  in practice, as well as under a perfectly mixing forward kernel. This convergence result facilitates the analysis of the SMC sampler and in particular highlights the impact of the choice of the sequence of target distributions on the algorithm performance. The first contribution of this paper consists in proposing a adaptive strategy in order to obtain an automatic choice of the sequence of intermediate target distributions that optimizes the asymptotic variance of the estimator of the marginal likelihood. The second main contribution is the derivation of effective schemes in order to improve the global efficiency of the SMC sampler. The idea developed in this paper is to propose some correction mechanisms that allow the use of the particles generated through all the iterations of the algorithm (instead of only the particles from the last iteration) in order to improve the accuracy of the empirical approximation of the target distribution.

%%%%%%%%%%%%%%%%%%%%%%%%%%%%%%%%%%%%%%%%%%%%%%%%%%%%%%%%%%%%%%%%%%%%%%%%%%
%%%%%%%%%%%%%%%%%%%%%%%%%%%%%%%%%%%%%%%%%%%%%%%%%%%%%%%%%%%%%%%%%%%%%%%%%%
\section{SMC Samplers for Bayesian inference} \label{GeneralSMC}
%%%%%%%%%%%%%%%%%%%%%%%%%%%%%%%%%%%%%%%%%%%%%%%%%%%%%%%%%%%%%%%%%%%%%%%%%%
%%%%%%%%%%%%%%%%%%%%%%%%%%%%%%%%%%%%%%%%%%%%%%%%%%%%%%%%%%%%%%%%%%%%%%%%%%

%%%%%%%%%%%%%%%%%%%%%%%%%%%%%%%%%%%%%%%%%%%%%%%%%%%%%%%%%%%%%%%%%%%%%%%%%%
\subsection{General Principle of SMC Samplers}\label{MethodologySMCSamplers}
%%%%%%%%%%%%%%%%%%%%%%%%%%%%%%%%%%%%%%%%%%%%%%%%%%%%%%%%%%%%%%%%%%%%%%%%%%

Sequential Monte Carlo (SMC) methods are a class of sampling algorithms which combine importance sampling and resampling. The SMC sampler is based on two main ideas:
\begin{enumerate}
\item[a)] Rather than sampling directly the complex distribution of interest, a sequence of intermediate target distributions, $\left\{\Target _{\TimeIndex }\right\}_{\TimeIndex=1}^\NbIterSMC$, are designed, that transitions smoothly from a simpler distribution to the one of interest.
In Bayesian inference problems, the target distribution is the posterior $\Target_{T}(\state  )=p(\state|{\bm z})$, thus a natural choice for such a sequence of intermediate distributions is  to select the following \cite{Neal2001}
\begin{equation}\label{SequenceSMCsampler}
\Target _{\TimeIndex }(\state  ) = \frac{\UnNorTarget_t(\state)}{\NormConst_t}\varpropto p(\state  ) p (\Data | \state  )^{\CoolSchedule _{\TimeIndex }} ,
\end{equation} 
where $\left \{   \CoolSchedule _{\TimeIndex }\right\} $ is a non-decreasing temperature schedule with $\CoolSchedule _{0}=0$ and $\CoolSchedule _{\NbIterSMC }=1$ and $\UnNorTarget_t(\state)$ corresponds to the unnormalized target distribution $\left( \text{i.e. } \UnNorTarget_t(\state)=p(\state  ) p (\Data | \state  )^{\CoolSchedule _{\TimeIndex }}\right)$ and $\NormConst_t = \int_{\Theta}p(\state  )  p (\Data | \state  )^{\CoolSchedule _{\TimeIndex }} d\state$ is the normalization constant. We  initially target  the prior distribution  $\Target _{0}= p(\state  )$  which is generally easy to sample directly from  and then introduce the effect of the likelihood gradually in order to obtain at the end, $\TimeIndex =\NbIterSMC $, the complex posterior distribution of interest $\Target _{T}(\state  )=p(\state  | \Data )$ as target distribution. 
\item [b)] The idea is to  transform this problem in the standard SMC filtering framework, where the sequence of target distributions on the path-space, denoted by $\{ \tilde{\Target}_t \}_{t =1}^T$, which admits $\Target_t(x_t)$ as marginals, is defined on the product space, i.e., $\text{supp}(\tilde{\Target}_t) = \Theta \times \Theta \times ... \times \Theta = \Theta^t$. This novel sequence of joint target distributions $ \tilde{\Target }_{\TimeIndex }$ is defined as follows:
		 \begin{equation}\label{Eq_Artificial_TargetDistribution}
	 	 \tilde{\Target }_{\TimeIndex } (\state _{1:\TimeIndex })=\dfrac{\tilde{\UnNorTarget}_{\TimeIndex } (\state _{1:\TimeIndex }) }{\NormConst _{\TimeIndex }} ,
	 \end{equation}
	 where
	\begin{equation}\label{Eq_Artificial_UnNorTargetDistribution}
		 \tilde{\UnNorTarget  }_{\TimeIndex } (\state _{1:\TimeIndex })= \UnNorTarget  _{\TimeIndex }(\state_{\TimeIndex } ) \prod\limits _{k=1}^{\TimeIndex -1} \BackKernel _{k}(\state _{k+1},\state _{k}) ,
	\end{equation}
	in which  the artificial kernels introduced $\{\BackKernel _{k}\}_{k=1}^{\TimeIndex -1}$ are called \textit{backward} Markov kernels since $\BackKernel _{\TimeIndex } (\state _{\TimeIndex+1},\state _{\TimeIndex})$ denotes the  probability density of moving back from $\state _{\TimeIndex +1}$ to $\state _{\TimeIndex }$.  By using such a sequence of extended target distributions $\left \{  \tilde{\Target }_{\TimeIndex } \right\}_{\TimeIndex =1}^{\NbIterSMC } $ based on the introduction of backward kernels $\{\BackKernel _{k}\}_{k=1}^{\TimeIndex -1}$, sequential importance sampling can thus be utilized in the same manner as standard SMC filtering algorithms.% without  having to compute the marginal distributions $\PropDist _{\TimeIndex }$ explicitly. 
\end{enumerate}

	Within this framework, one may then work with the constructed sequence of distributions, $\widetilde{ \Target }_{\TimeIndex }$, under the standard SMC algorithm \cite{Doucet2001}. In summary, the SMC sampler algorithm therefore involves three stages:
\begin{enumerate}
\item{\underline{{\it Mutation:}}, where the particles are moved from $\State _{\TimeIndex -1}$ to $\State _{\TimeIndex }$ via a \textit{mutation kernel} $\Kernel _{\TimeIndex }(\state  _{\TimeIndex -1},\state_ {\TimeIndex })$ also called \textit{forward kernel};}
\item{\underline{{\it Correction:}},  where the particles are reweighted with respect to $\Target _{\TimeIndex }$ via the incremental importance weight (Equation (\ref{Eq_SMC_IncreWeights})); and}
\item{ \underline{{\it Selection:}}, where according to some measure of particle diversity, \Correct{such as} effective sample size, the weighted particles may be resampled in order to reduce the variability of the importance weights. }
\end{enumerate}

	In more detail, suppose that at time $\TimeIndex -1$,  we have a set of weighted particles $\left \{  \State_{1:\TimeIndex-1 }^{(\ParsIndexSMC )} ,\NormISWeight   _{\TimeIndex -1}^{(\ParsIndexSMC )} \right\} _{\ParsIndexSMC =1}^{\NbParticles }$ that approximates  $\tilde{\Target }_{\TimeIndex -1}$ via the empirical measure
	
	\begin{equation}\label{Eq_Approx_Artificial_TargetDistribution}
	{\tilde{\Target }}_{\TimeIndex-1}^N (d \state _{1:\TimeIndex -1})=\sum\limits _{\ParsIndexSMC =1}^{\NbParticles } \NormISWeight   _{\TimeIndex -1}^{(\ParsIndexSMC )} \delta_{\State _{1:\TimeIndex -1}^{(\ParsIndexSMC )}} (d \state _{1:\TimeIndex -1}).
	\end{equation}
	These particles are first propagated to the next distribution $\tilde{\Target }_{\TimeIndex }$ using \Correct{a} Markov kernel $\Kernel _{\TimeIndex } (\state_{\TimeIndex -1},\state _{\TimeIndex })$ to obtain the set of particles $\left  \{\State_{1:\TimeIndex }^{(\ParsIndexSMC )}  \right\} _{\ParsIndexSMC =1}^{\NbParticles }$. Importance Sampling (IS) is then used to correct for the discrepancy between the sampling distribution $\PropDist_{\TimeIndex }(\State_{1:\TimeIndex })$ defined \Correct{as}% reach the joint importantce distribution
	\begin{equation}\label{Eq_SMC_JointImportanceDist}
		\PropDist_{\TimeIndex } (\state_{1:\TimeIndex }^{(\ParsIndexSMC )})=\PropDist _{1}(\state  _{1}^{(\ParsIndexSMC )}) \prod_{k=2}^{\TimeIndex } \Kernel _{k}(\state  _{\TimeIndex -1}^{(\ParsIndexSMC )},\state  _{\TimeIndex }^{(\ParsIndexSMC )}) ,
	\end{equation}
	and $\tilde{\Target}_{\TimeIndex}(\State_{1:\TimeIndex })$. In this case the new expression for the unnormalized importance weights is given by
	% where $\PropDist _{0}$ is an initial importance distribution, then assigned new importance weights 
	\begin{equation}\label{Eq_SMC_ImportanceWeights}		
			\ISWeight_{\TimeIndex }^{(\ParsIndexSMC )} \varpropto \frac{ \tilde{\Target}_{\TimeIndex } (\state_{1:\TimeIndex }^{(\ParsIndexSMC )})}{\PropDist_{\TimeIndex } (\state _{1:\TimeIndex }^{(\ParsIndexSMC )} )} =\dfrac{\Target _{\TimeIndex }(\state  _{\TimeIndex }^{(\ParsIndexSMC )}) \prod_{s=1}^{\TimeIndex -1} \BackKernel_{s}(\state  _{s+1}^{(\ParsIndexSMC )},\state  _{s}^{(\ParsIndexSMC )})}{\PropDist _{1}(\state  _{1}^{(\ParsIndexSMC )}) \prod_{k=2}^{\TimeIndex } \Kernel _{k}(\state  _{k -1}^{(\ParsIndexSMC )},\state  _{k }^{(\ParsIndexSMC )})} \varpropto  \IncreWeight _{\TimeIndex } (\State _{\TimeIndex -1}^{(\ParsIndexSMC )},\State _{\TimeIndex }^{(\ParsIndexSMC )}) \ISWeight_{\TimeIndex -1}^{(\ParsIndexSMC )} ,
	\end{equation}
	where $\IncreWeight_{\TimeIndex }$, termed the (unnormalized) \textit{incremental weights}, are calculated as,
		\begin{equation}\label{Eq_SMC_IncreWeights}
		\IncreWeight_ {\TimeIndex } (\State _{\TimeIndex -1}^{(\ParsIndexSMC )},\State _{\TimeIndex }^{(\ParsIndexSMC )})=\dfrac{\UnNorTarget _{\TimeIndex }(\state  _{\TimeIndex }^{(\ParsIndexSMC )}) \BackKernel _{\TimeIndex -1}(\state  _{\TimeIndex }^{(\ParsIndexSMC )},\state  _{\TimeIndex -1}^{(\ParsIndexSMC )})}{\UnNorTarget _{\TimeIndex -1}(\state  _{\TimeIndex -1}^{(\ParsIndexSMC )}) \Kernel _{\TimeIndex }(\state  _{\TimeIndex -1}^{(\ParsIndexSMC )},\state  _{\TimeIndex }^{(\ParsIndexSMC )})} .
	\end{equation}
	\Correct{However, as in the particle filter, since the discrepancy between the target distribution $\tilde{\Target}_{\TimeIndex }$ and the proposal $\PropDist_{\TimeIndex }$ increases with $t$, the variance of the unnormalized importance weights tends therefore to increase as well, leading to a degeneracy of the particle approximation. A common criterion used in practice to check this problem is the effective sample size $\ESS$ which can be computed by:}
	\begin{equation}\label{Eq_SMC_ESS}
		\ESS_{\TimeIndex }=\left [  \sum\limits _{\ParsIndexSMC =1}^{\NbParticles } (\NormISWeight  _{\TimeIndex }^{(\ParsIndexSMC )})^{2} \right] ^{-1}=\dfrac{\left ( \sum\limits _{\ParsIndexSMC =1}^{\NbParticles } \ISWeight  _{\TimeIndex -1}^{(\ParsIndexSMC )} \UnNorIncreWeight _{\TimeIndex } (\State _{\TimeIndex -1}^{(\ParsIndexSMC )},\State _{\TimeIndex }^{(\ParsIndexSMC )}) \right) ^{2}}{\sum\limits _{j=1}^{\NbParticles }\left (   \ISWeight  _{\TimeIndex -1}^{(j)} \right) ^{2} \left (\UnNorIncreWeight _{\TimeIndex } (\State _{\TimeIndex -1}^{(j)},\State _{\TimeIndex }^{(j)})   \right) ^{2}} .
	\end{equation}
	If the degeneracy is too high, \Correct{i.e.,} the $\ESS_{\TimeIndex }$ is below a prespecified threshold, $\overline{\ESS}$, then a resampling step is performed. The particles with low weights are discarded whereas particles with high weights are duplicated. After resampling, the particles are equally weighted. 	
	
	Let us mention two interesting estimates from SMC samplers. Firstly, since  $\tilde{\Target }_{\TimeIndex }$ admits $\Target _{\TimeIndex }$ as marginals by construction, for any $1 \leq \TimeIndex \leq \NbIterSMC $ , \Correct{the} SMC sampler provides an estimate of this distribution 
	\begin{equation}\label{Eq_SMC_ApproxTarget}
		{\Target }_{\TimeIndex }^\NbParticles(d\state  )=\sum\limits _{\ParsIndexSMC =1}^{\NbParticles } \NormISWeight   _{\TimeIndex }^{(\ParsIndexSMC )} \delta_{\State _{\TimeIndex }^{(\ParsIndexSMC )}} (d\state  ),
	\end{equation}
	and an estimate of any expectations of some integrable function $\TestFunction (\cdot )$ with respect to this distribution given by
	\begin{equation}
		\Expec _{{\Target }_{\TimeIndex }^{\NbParticles }}\left[ \TestFunction (\state )\right]=\sum_{\ParsIndexSMC =1}^{\NbParticles } \NormISWeight _{\TimeIndex }^{(\ParsIndexSMC )} \TestFunction (\state _{\TimeIndex }^{(\ParsIndexSMC )}) .
		\label{ExpectationApproxSMCSampler}
	\end{equation}
	Secondly,  the \Correct{estimated ratio} of normalizing constants $\dfrac{\NormConst _{\TimeIndex }}{\NormConst _{\TimeIndex -1}}=\dfrac{\int \UnNorTarget _{\TimeIndex }(\state ) d \state  }{\int \UnNorTarget _{\TimeIndex-1 }(\state ) d \state  }$ is given by
	\begin{equation}\label{Eq_Approx_Ratio_NormalizingConstant}		
		\widehat{\dfrac{\NormConst_{\TimeIndex }}{\NormConst_{\TimeIndex -1}}}=\sum\limits _{\ParsIndexSMC =1}^{\NbParticles } \NormISWeight  _{\TimeIndex -1}^{(\ParsIndexSMC )} \UnNorIncreWeight _{\TimeIndex } (\State _{\TimeIndex -1}^{(\ParsIndexSMC )},\State _{\TimeIndex }^{(\ParsIndexSMC )}) .
	\end{equation}
	Consequently, the estimate of $\dfrac{\NormConst_{\TimeIndex } }{\NormConst_{1 }}$ \Correct{is}
	\begin{equation}\label{Eq_Approx_Ratio_NormalizingConstantZ0Zt}	
		\widehat{\dfrac{\NormConst_{\TimeIndex }}{\NormConst_{1 }}}=\prod\limits _{k=2}^{\TimeIndex} \widehat{\dfrac{\NormConst_{k }}{\NormConst_{k-1 }}}=\prod\limits _{k=2}^{\TimeIndex} \sum\limits _{\ParsIndexSMC =1}^{\NbParticles }  \NormISWeight  _{k -1}^{(\ParsIndexSMC )} \UnNorIncreWeight _{k } (\State _{k -1}^{(\ParsIndexSMC )},\State _{k }^{(\ParsIndexSMC )}) .
	\end{equation}
	If the resampling scheme used is unbiased, then (\ref{Eq_Approx_Ratio_NormalizingConstantZ0Zt}) is also unbiased  whatever the number of particles used \cite{DelMoral2000}. Moreover, the complexity of this algorithm is  ${\cal O}(\NbParticles )$ per time step and it can be easily  parallelized.  
	
	Finally, let us note that there exists a few other SMC methods appropriate for static inference such as annealed importance sampling \cite{Neal2001}, the sequential particle filter of \cite{Chopin2002} and population Monte Carlo \cite{Cappe2004} but all of these methods can be regarded as  a special case of the SMC sampler framework.
	
	%==========================
	\subsection{On the choice of the sequence of target distributions and mutation/backward kernels}\label{AlgorithmSettingsChap1}

	The algorithm presented in the previous subsection is very general. There is a wide range of possible choices to consider when designing an SMC sampler algorithm, the appropriate sequence of distributions $\{\Target_{\TimeIndex  }\}_{1 \leq \TimeIndex \leq \NbIterSMC }$, the choice of both the mutation kernel $\{\Kernel _{\TimeIndex }\}_{2 \leq \TimeIndex \leq \NbIterSMC }$ and the backward mutation kernel $\{\BackKernel _{\TimeIndex-1 }\}_{\TimeIndex =2}^{\NbIterSMC }$ (for a given mutation kernels), see details in \cite{Peters:2005ti,DelMoral2006,peters2012sequential}. In this subsection, we provide a discussion on how to choose these parameters of the algorithm in practice.
	
\subsubsection{\underline{Sequence of distributions $\Target _{\TimeIndex }$}} %~\\

There are many potential \Correct{choices} for $\{\Target_{\TimeIndex  }  \}$  leading to various integration and optimization algorithms. As  a special case, we can set $\Target_{\TimeIndex  }=\MainTarget   $ for all $\TimeIndex \in \TimeSet $. Alternatively, to maximize $\MainTarget  (\state  )$, we could consider $\Target_{\TimeIndex  }(\state_ {\TimeIndex} )=[\MainTarget (\state_ {\TimeIndex} )]^{\xi_{\TimeIndex}}$ for an increasing schedule $\{\xi_{\TimeIndex}\}_{\TimeIndex \in \TimeSet}$ to ensure $\Target _{\NbIterSMC }(\state )$ is concentrated around the set of global maxima of $\MainTarget  (\state  )$. In the context of Bayesian inference for static parameters which is the main focus of this paper,  one can consider $\Target_{\TimeIndex  }(\state  )=p(\state |\OneData_{1},\cdots, \OneData_{\TimeIndex})$, which corresponds to \textit{data tempered} schedule. 

In this paper, we are interested in the \textit{likelihood tempered} target sequence, that has been proposed in \cite{Neal2001},
\begin{equation}\label{SequenceSMCsampler}
\Target _{\TimeIndex }(\state  )=\frac{\UnNorTarget_t(\state)}{\NormConst_t}\varpropto p(\state  ) p (\Data | \state  )^{\CoolSchedule _{\TimeIndex }} ,
\end{equation} 
where $\left \{   \CoolSchedule _{\TimeIndex }\right\} $ is a non-decreasing temperature schedule with $\CoolSchedule _{0}=0$ and $\CoolSchedule _{\NbIterSMC }=1$. We thus sample initially from the prior distribution  $\Target _{0}= p(\state  )$ directly and introduce the effect of the likelihood gradually in order to obtain at the end $\TimeIndex =\NbIterSMC $ an approximation of the posterior distribution $p(\state  | \Data )$. As discussed in \cite{Neal2001}, tempering the likelihood could significantly  improve the exploration of the state space in complex multimodal posterior distribution. From Eq. (\ref{Eq_Approx_Ratio_NormalizingConstantZ0Zt}), the normalizing constant of the posterior target distribution which corresponds to the \textit{marginal likelihood}, $p (\Data)$, can be approximated with SMC samplers as:
	\begin{equation}\label{Eq_Approx_BayesianEvidence}
		\NormConst _{\NbIterSMC }=\NormConst _{1} \prod\limits _{\TimeIndex =2}^{\NbIterSMC }\dfrac{\NormConst _{\TimeIndex }}{\NormConst _{\TimeIndex -1}} \approx \prod\limits _{\TimeIndex =2}^{\NbIterSMC } \sum\limits _{\ParsIndexSMC =1}^{\NbParticles }  \NormISWeight  _{\TimeIndex  -1}^{(\ParsIndexSMC )} \UnNorIncreWeight _{\TimeIndex  } (\State _{\TimeIndex  -1}^{(\ParsIndexSMC )},\State _{\TimeIndex  }^{(\ParsIndexSMC )})
	\end{equation}
	where $Z_{\TimeIndex }=\int p (\Data | \state )^{\CoolSchedule _{\TimeIndex }}p (\state ) d \state  $  corresponds to the normalizing constant of the target distribution at iteration $\TimeIndex $ (thus $\NormConst_1=\int p (\state ) d \state=1$). The approximation of an expectation with respect to the posterior is given by:
	\begin{equation}\label{FinalExpectationApproxSMC}
	\Expec_{\Target^N} \left[\TestFunction(\state) \right] =\sum_{i=1}^{\NbParticles } \NormISWeight_{\NbIterSMC }^{(i)} \TestFunction (\State_{\NbIterSMC }^{(i)}).
	\end{equation}

\subsubsection{\underline{Sequence of mutation kernels $\Kernel _{\TimeIndex }$}}
	
		The performance of SMC samplers \Correct{depends} heavily upon the selection of the transition kernels $\left \{  \Kernel _{\TimeIndex } \right\} _{\TimeIndex =2}^{\NbIterSMC }$  and the auxiliary backward kernels $\left \{ \BackKernel _{\TimeIndex -1}  \right\}_{\TimeIndex =2}^{\NbIterSMC } $. %We are now in a position to show that how to select these kernel in order to minimizes the variance of the importance weights.
		  There are many possible choices for $\Kernel _{\TimeIndex }$ which have been discussed in \cite{Peters:2005ti,DelMoral2006,peters2012sequential}. In this study, we propose to employ  MCMC kernels of invariant distribution $\Target _{\TimeIndex }$ for $\Kernel _{\TimeIndex }$. This is an attractive strategy since we can use the vast literature on the design of efficient MCMC algorithms to build a good importance distributions (See \cite{Robert2004}). 
		  
		  More precisely, since we are interested in complex models with potentially high-dimensional and multimodal posterior distribution, a series of Metropolis-within-Gibbs kernels allowing local moves will be employed in order to successively move the $B$ sub-blocks of the state of interest, $\State =[\Block _{1}, \Block _{2}, \cdots, \Block _{B}]$. A random walk proposal distribution is used for each sub-block with a multivariate Gaussian distribution as proposal:
\begin{equation}
\Block _{b,t}^*=\Block _{b,t-1} + {\bm \epsilon}_{b,t},
\end{equation}		  
\Correct{in which $\epsilon_{b,t}$ is a Gaussian random variable with zero mean and covariance matrix   ${\bm \Sigma}_{b,t}$}.  As with any sampling \Correct{algorithm}, faster mixing does not harm performance and in some cases will considerably improve it. In the particular case of Metropolis-Hastings kernels, the mixing speed relies on adequate proposal scales. As a consequence, we adopt the strategy proposed in \cite{jasra2011}. The authors applied an idea used within adaptive MCMC methods \cite{andrieu2006} to SMC samplers by using the variance of the parameters estimated from its particle system approximation as the proposal scale for the next iteration, \Correct{i.e.,} the covariance matrix of the random-walk move for the $b$-th sub-block at time $t$ is given by:
\begin{eqnarray}\label{ComputationAdaptiveScalingAWG}
{\bm \Sigma}_{b,t}&=&\sum_{m=1}^\NbParticles \NormISWeight_{\TimeIndex-1 }^{(m)} \left(\Block _{b,\TimeIndex-1}^{(m)} - {\bm \mu}_{b,\TimeIndex-1} \right) \left(\Block _{b,\TimeIndex-1}^{(m)} - {\bm \mu}_{b,\TimeIndex-1} \right)^T ,\\
\text{with} && {\bm \mu}_{b,\TimeIndex-1}=\sum_{m=1}^\NbParticles \NormISWeight_{\TimeIndex-1 }^{(m)} \Block _{b,\TimeIndex-1}^{(m)}. \nonumber
\end{eqnarray} 
  The motivation is that if $\Target _{\TimeIndex-1 }$ is close to $\Target _{\TimeIndex}$ (which is recommended for having an efficient SMC algorithm), then the variance estimated at iteration $t-1$ will provide a sensible scaling at time $t$. \Correct{This adaptive Metropolis within Gibbs used in the implementation of the SMC sampler through this  paper is summarized in Algorithm \ref{Algo_AMWG}.}
 	  
 	  In difficult problems, other approaches could be added in order to have appropriate scaling adaptation; one approach demonstrated in \cite{jasra2011} is to simply employ a pair of acceptance rate thresholds and to alter the proposal scale given by Eq. (\ref{ComputationAdaptiveScalingAWG}) whenever the acceptance rate falls outside those threshold values. This scheme is to ensure that the acceptance rates in the Metropolis-Hastings steps do not get too large or small. Through all this paper, we use this procedure which consists for example to multiply the covariance matrix by 5 (resp. 1/5)  if the rate exceeded 0.7 (resp. fell below 0.2).

		\subsubsection{\underline{Sequence of  backward kernels $\BackKernel _{\TimeIndex }$}}

		The backward kernel $\BackKernel _{\TimeIndex }$ is arbitrary, however as discussed in \cite{Peters:2005ti,DelMoral2006,peters2012sequential}, it should be optimized with respect to mutation kernel $\Kernel _{\TimeIndex }$ to obtain good performance.  \cite{Peters:2005ti,DelMoral2006,peters2012sequential} establish that the backward kernel which minimize the variance of the unnormalized importance weights, $\ISWeight _{\TimeIndex }$, are given by
	  \begin{equation}\label{Eq_Optimal_BackwardKernel}
	  	\BackKernel _{\TimeIndex -1}^{\text{opt}} (\state  _{\TimeIndex },\state  _{\TimeIndex -1 })=\dfrac{\PropDist _{\TimeIndex -1}(\state  _{\TimeIndex -1}) \Kernel _{\TimeIndex }(\state  _{\TimeIndex -1},\state  _{\TimeIndex })}{\PropDist _{\TimeIndex }(\state  _{\TimeIndex })} .
	  \end{equation}
	  However, as discussed in Section \ref{MethodologySMCSamplers}, it is typically impossible to use these optimal kernels as they rely on marginal distributions of the joint proposal distribution defined in Eq. (\ref{Eq_SMC_JointImportanceDist})  which do not admit any closed form expression, especially if an MCMC kernel is used as $\Kernel _{\TimeIndex }$ which is  $\Target _{\TimeIndex }$-invariant distribution. Thus we can either choose to approximate $\BackKernel _{\TimeIndex }^{\text{opt}}$ or choose kernels $\BackKernel _{\TimeIndex }$ so that the importance weights are easily calculated or have a familiar form. As discussed in  \cite{Peters:2005ti,DelMoral2006}, if an MCMC kernel is used as forward kernel, the following $\BackKernel _{\TimeIndex }$ is employed 
	\begin{equation}\label{Eq_SubOptimal_BackwardKernel}		
		\BackKernel _{\TimeIndex -1} (\state _{\TimeIndex },\state _{\TimeIndex -1})=\dfrac{\Target _{\TimeIndex }(\state _{\TimeIndex -1}) \Kernel _{\TimeIndex }(\state _{\TimeIndex -1},\state _{\TimeIndex })}{\Target _{\TimeIndex }(\state _{\TimeIndex})},
	\end{equation}
which is a good approximation of the optimal backward if the discrepancy between $\Target _{\TimeIndex }$ and $\Target _{\TimeIndex -1}$ is small; note that (\ref{Eq_SubOptimal_BackwardKernel}) is the reversal Markov kernel associated with $\Kernel_{\TimeIndex }$. In this case,  the unnormalized incremental weights becomes	
	\begin{equation}\label{Eq_UnNormalized_IncrementWeights1}
		\UnNorIncreWeight _{\TimeIndex }^{(\ParsIndexSMC )} (\State _{\TimeIndex -1}^{(\ParsIndexSMC )},\State _{\TimeIndex }^{(\ParsIndexSMC )})=\dfrac{\UnNorTarget_{\TimeIndex }(\state  _{\TimeIndex -1}^{(\ParsIndexSMC )})}{\UnNorTarget_{\TimeIndex-1 }(\state  _{\TimeIndex -1}^{(\ParsIndexSMC )})}=  p(\Data | \state _{\TimeIndex -1}^{(\ParsIndexSMC )}  )^{(\CoolSchedule _{\TimeIndex }-\CoolSchedule _{\TimeIndex -1})} %= p(\state  ) p(\Data | \state  )^{\IncreStep_{t}} .
	\end{equation}	
	This expression (\ref{Eq_UnNormalized_IncrementWeights1}) is remarkably easy to compute and valid regardless of the MCMC \Correct{kernel} adopted. Note that $\CoolSchedule _{\TimeIndex }-\CoolSchedule _{\TimeIndex -1}$ is the step length of the cooling schedule of the likelihood at time $\TimeIndex $. As we choose this step larger, the discrepancy between $\Target _{\TimeIndex }$ and $\Target _{\TimeIndex -1}$ increases, leading to an increase as the variance of the importance approximation. Thus, it is important to construct a smooth sequence of distributions $\left \{ \Target _{\TimeIndex }  \right\} _{0 \leq \TimeIndex \leq \NbIterSMC }$ by judicious choice of an associated real sequence $\left \{ \CoolSchedule _{\TimeIndex }  \right\} _{\TimeIndex =0}^{\NbIterSMC }$.
	
	Let us remark that when such backward kernel is used, the unnormalized incremental weights in Eq. (\ref{Eq_UnNormalized_IncrementWeights1}) at time $t$ does not depend on the particle value at time $t$ but just on the previous particle set. In such a case, the particles $\left\{ \state _{\TimeIndex }^{(\ParsIndexSMC )}   \right\}$ should be sampled after the weights 	$\left\{\ISWeight_{\TimeIndex }^{(\ParsIndexSMC )}\right\}$ have been computed and after the particle approximation  $\left\{ \ISWeight_{\TimeIndex }^{(\ParsIndexSMC )},\state _{\TimeIndex-1 }^{(\ParsIndexSMC )}   \right\}$ has possibly been resampled. %The benefit of using such a strategy in this case will be demonstrated in the next section by studying the asymptotic variances of the SMC estimates.
	
	Based on these discussions regarding the different possible choices, the SMC sampler that will be used for Bayesian inference through this paper is summarized in Algorithm \ref{Algo_SMCSpecific}.

 \begin{algorithm}[h]  
			\caption{SMC Sampler Algorithm}
			\label{Algo_SMCSpecific}
			 \begin{algorithmic}[1]
 			 	 \small 
  			 	 \STATE \underline{Initialize particle system} %from $\PropDist _{0}$
   				 \STATE Sample $\left \{\State _{1}^{(\ParsIndexSMC )}   \right\}_{\ParsIndexSMC =1}^{\NbParticles } \sim \PropDist _{1} (\cdot)$ and compute  $\NormISWeight  _{1}^{(\ParsIndexSMC )}=\left ( \frac{\UnNorTarget _{1}(\state _{1}^{(\ParsIndexSMC )})}{\PropDist _{1}(\state _{1}^{(\ParsIndexSMC )})}  \right) \left [  \sum_{j =1}^{\NbParticles }\frac{\UnNorTarget _{1}(\state _{1}^{(j)})}{\PropDist _{1}(\state _{1}^{(j )})} \right] ^{-1} $ and do resampling if $\ESS < \overline{\ESS}$
  				 \FOR {$\TimeIndex =2, \ldots, \NbIterSMC $}
					\STATE \underline{Computation of the weights:} for each $\ParsIndexSMC =1,\ldots,\NbParticles $ 
  							\vspace{0.1cm} $$\ISWeight _{\TimeIndex }^{(\ParsIndexSMC )}= \NormISWeight     _{\TimeIndex -1}^{(\ParsIndexSMC )} p(\Data | \state _{\TimeIndex -1}^{(\ParsIndexSMC )}  )^{(\CoolSchedule _{\TimeIndex }-\CoolSchedule _{\TimeIndex -1})}$$\vspace{0.1cm}    
   						Normalization of the weights : $\NormISWeight _{\TimeIndex }^{(\ParsIndexSMC )}=\ISWeight _{\TimeIndex }^{(\ParsIndexSMC )}\left [  \sum_{j=1} ^{\NbParticles } \ISWeight _{\TimeIndex }^{(j)}\right] ^{-1}$  				 
  				 		 \STATE \underline{Selection:} if $ESS<\overline{\ESS}$ then Resample
  				 		 \STATE \underline{Mutation:} for each $\ParsIndexSMC =1,\ldots,\NbParticles $ : Sample $\State _{\TimeIndex }^{\ParsIndexSMC } \sim \Kernel _{\TimeIndex }(\state_{\TimeIndex -1}^{(\ParsIndexSMC )};\cdot)$ where $\Kernel _{\TimeIndex }(\cdot;\cdot)$ is a $\Target _{\TimeIndex }(\cdot)$ invariant Markov kernel described in more details in Algo. \ref{Algo_AMWG}.
  				 \ENDFOR
			\end{algorithmic}
	\end{algorithm}

	 \begin{algorithm}[h]  
			\caption{Adaptive Metropolis-within-Gibbs Kernel $\Kernel _{\TimeIndex }(\cdot;\cdot)$ for the $m$-th particle}
			\label{Algo_AMWG}
			 \begin{algorithmic}[1]
 			 	 \small 
  				\STATE \underline{Initialization} Set $\State^0=[\Block _{1}^{0}, \ldots, \Block _{B}^{0}]=\State_{t-1}^{(m)} =[\Block _{1,t-1}^{(m)}, \ldots, \Block _{B,t-1}^{(m)}]$
  				 \FOR {$i =1, \ldots, \NbMCMCmove $}
  				 \FOR{$b=1,\ldots,B$}
					\STATE Sample $\Block_b^* \sim {\cal N} \left( \Block_b^{i-1}, {\bm \Sigma}_{b,t} \right)$ with ${\bm \Sigma}_{b,t}$ defined in Eq. \ref{ComputationAdaptiveScalingAWG}
					\STATE Compute the Acceptance ratio:
					$$\AcceptRatio (\Block_b ^{*},\Block_b ^{i -1})=\min \left \{ 1,\dfrac{p(\Data | \state^{*}  )^{\CoolSchedule _{\TimeIndex }} p(\state^{*} )}{p(\Data | \state^{i-1}  )^{\CoolSchedule _{\TimeIndex }} p(\state^{i-1} )}\right\} $$
					with $\state^{*} =[\Block _{1}^{i}, \ldots,\Block _{b-1}^{i}, \Block _{b}^{*},\Block _{b+1}^{i-1},\ldots,\Block _{B,t-1}^{0}]$
					and $\state^{i-1} =[\Block _{1}^{i}, \ldots,\Block _{b-1}^{i}, \Block _{b}^{i-1},\Block _{b+1}^{i-1},\ldots,\Block _{B,t-1}^{0}]$
			\STATE Sample random variate $u$ from $\Uniform (0,1)$   					
					\IF{$u \leq \AcceptRatio (\State ^{*},\State ^{i-1})$}
						\STATE $\Block_b ^{i}=\Block_b ^{*}$	
					\ELSE
						\STATE $\Block_b ^{i}=\Block_b ^{i-1}$
					\ENDIF			
					\ENDFOR		
  				 \ENDFOR
  				\STATE  Set the new particle value at time $t$ as $\State_{t}^{(m)} =[\Block _{1}^{\NbMCMCmove}, \ldots, \Block _{B}^{\NbMCMCmove}]$
			\end{algorithmic}
	\end{algorithm}

\subsection{Asymptotic analysis of the SMC sampler-based estimator} \label{SMC_convergence}

		In the remainder of this section,  we are interested in the convergence results of SMC Samplers. We aim  at analyzing the influence of the choice of the sequence of intermediate target distributions in the performance of the SMC sampler  when resampling is performed before the sampling step. 
As discussed in the previous section, one of the main attractive \Correct{properties} of the SMC sampler is to be able to use some local moves (using an MCMC kernel) in order to draw the particles at the next iteration. Such local moves are particularly interesting when the state of interest is high-dimensional. As discussed in Section \ref{MethodologySMCSamplers}, when such an MCMC kernel is used as a forward kernel in the SMC sampler, $\Kernel_{\TimeIndex}$, the backward kernel used in order to be able to compute the incremental weight is given in Eq. (\ref{Eq_SubOptimal_BackwardKernel}). 
Moreover, in order to obtain convergence results that are easy to analyze and utilize, we assume that the MCMC kernel used is perfectly mixing, \Correct{i.e.} %:
%\begin{equation}
$\Kernel_{\TimeIndex}(\state_{\TimeIndex-1},\state_{\TimeIndex})=\pi_{\TimeIndex}(\state_{\TimeIndex}) $.
%\end{equation}

Under these two assumptions, we derive the asymptotic variance of the SMC sampler estimates when resampling is performed before the sampling stage at each iteration.
	\begin{proposition} \label{Proposition_Convergence_RB}
	\begin{itshape}
Under perfect mixing assumption and if the backward kernel given in Eq. (\ref{Eq_SubOptimal_BackwardKernel}) is used, we obtain the following results:
\begin{enumerate}

\item For \Correct{the  expectation estimator:}	\begin{equation}
							N^{\frac{1}{2}}\left \{\Expec_{\pi^{N}_{\TimeIndex}} (\varphi)-\Expec_{\pi_{\TimeIndex}}(\varphi)\right \}\Rightarrow \mathcal{N} (0,\sigma_{SMC,\TimeIndex}^{2}(\varphi))
						\end{equation}	
						with
		%\begin{framed}
							\begin{equation}
								\sigma_{SMC,\TimeIndex}^{2}(\varphi)=\left \{\Expec_{\pi_{\TimeIndex}}( \varphi^{2}(\state))-\Expec^{2}_{\pi_{\TimeIndex}}(\varphi(\state))\right \}=\Var_{\pi_{\TimeIndex}}(\varphi(\state))
							\end{equation}
						%\end{framed}
\item For \Correct{the normalizing constant estimator:}
\begin{equation}
							N^{\frac{1}{2}} \left \{ \log\left (\widehat{\dfrac{\NormConst_{\TimeIndex}}{\NormConst_{1}}}\right )-\log \left (\dfrac{\NormConst_{\TimeIndex}}{\NormConst_{1}}\right ) \right \} \Rightarrow \mathcal{N} (0,\sigma^{2}_{SMC,\TimeIndex})
						\end{equation}
						
						with 
						\begin{equation}
							 	\begin{aligned}
							 		\sigma^{2}_{SMC,\TimeIndex}&=\int \dfrac{\pi_{2}^{2}(\state_{1})}{\eta_{1}(\state_{1})}d\state_{1}&+\sum\limits_{k=2}^{\TimeIndex-1} \int \dfrac{\pi_{k+1}^{2}(\state_{k})}{\pi_{k}(\state_{k})}d\state_{k} -(\TimeIndex-1)
							 	\end{aligned}
							 	\label{OptimalVar_ResamplingBefore}
						\end{equation}

\end{enumerate}
\end{itshape}
\end{proposition}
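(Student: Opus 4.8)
The plan is to exploit the two standing hypotheses — the perfectly mixing forward kernel $\Kernel_{\TimeIndex}(\state_{\TimeIndex-1},\state_{\TimeIndex})=\Target_{\TimeIndex}(\state_{\TimeIndex})$ and the fact that resampling is performed before the mutation at every iteration — which together decouple the successive iterations completely. The key observation I would establish first is the following: immediately after a resampling step the normalized weights are uniform, $\NormISWeight_{\TimeIndex}^{(\ParsIndexSMC)}=1/\NbParticles$, and the ensuing mutation draws $\State_{\TimeIndex}^{(\ParsIndexSMC)}\sim\Kernel_{\TimeIndex}(\cdot,\cdot)=\Target_{\TimeIndex}$ using fresh randomness that ignores its starting point; hence $\{\State_{\TimeIndex}^{(\ParsIndexSMC)}\}_{\ParsIndexSMC=1}^{\NbParticles}$ is an i.i.d.\ sample from $\Target_{\TimeIndex}$, independent of the resampled cloud and therefore of all particle systems of earlier iterations. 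By induction on $\TimeIndex$, the successive populations $\{\State_{1}^{(\ParsIndexSMC)}\}_{\ParsIndexSMC},\{\State_{2}^{(\ParsIndexSMC)}\}_{\ParsIndexSMC},\dots$ are mutually independent, the first being i.i.d.\ $\PropDist_{1}$ and the $k$-th ($k\ge 2$) i.i.d.\ $\Target_{k}$. This exact decoupling is what, in the perfectly mixing case, replaces the martingale machinery required for the general SMC sampler, and both parts of the proposition follow from it.

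Part 1 is then immediate: since the weights at iteration $\TimeIndex$ are uniform, $\Expec_{\Target_{\TimeIndex}^{\NbParticles}}(\TestFunction)=\NbParticles^{-1}\sum_{\ParsIndexSMC}\TestFunction(\State_{\TimeIndex}^{(\ParsIndexSMC)})$ is an average of i.i.d.\ $\Target_{\TimeIndex}$ summands, so the ordinary CLT (valid for $\TestFunction\in L^{2}(\Target_{\TimeIndex})$) gives the $\Normal(0,\Var_{\Target_{\TimeIndex}}(\TestFunction))$ limit. For Part 2 I would write $\widehat{\NormConst_{\TimeIndex}/\NormConst_{1}}=\prod_{k=2}^{\TimeIndex}\widehat r_{k}$, with $\widehat r_{k}=\sum_{\ParsIndexSMC}\NormISWeight_{k-1}^{(\ParsIndexSMC)}\UnNorIncreWeight_{k}(\State_{k-1}^{(\ParsIndexSMC)},\State_{k}^{(\ParsIndexSMC)})$ estimating $r_{k}=\NormConst_{k}/\NormConst_{k-1}$. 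Because $\UnNorIncreWeight_{k}$ does not involve $\State_{k}$ and the weights are uniform after resampling, for $k\ge 3$ one has $\widehat r_{k}=\NbParticles^{-1}\sum_{\ParsIndexSMC}p(\Data|\state_{k-1}^{(\ParsIndexSMC)})^{\CoolSchedule_{k}-\CoolSchedule_{k-1}}$, a function of population $k-1$ alone, while $\widehat r_{2}$ is a function of population $1$ alone; by the decoupling above, $\widehat r_{2},\dots,\widehat r_{\TimeIndex}$ are \emph{independent}. Hence $\log\widehat{\NormConst_{\TimeIndex}/\NormConst_{1}}-\log(\NormConst_{\TimeIndex}/\NormConst_{1})=\sum_{k=2}^{\TimeIndex}(\log\widehat r_{k}-\log r_{k})$ is a sum of independent terms; each satisfies $\widehat r_{k}\to r_{k}$ a.s., and a one-dimensional CLT plus the delta method for $\log$ gives $\NbParticles^{1/2}(\log\widehat r_{k}-\log r_{k})\Rightarrow\Normal(0,v_{k})$ with $v_{k}=\lim_{\NbParticles\to\infty}\NbParticles\,\Var(\widehat r_{k})/r_{k}^{2}$. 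Joint convergence of the independent coordinates and the continuous mapping theorem then yield $\NbParticles^{1/2}\{\log\widehat{\NormConst_{\TimeIndex}/\NormConst_{1}}-\log(\NormConst_{\TimeIndex}/\NormConst_{1})\}\Rightarrow\Normal(0,\sum_{k=2}^{\TimeIndex}v_{k})$.

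It then remains to evaluate the $v_{k}$. For $k\ge 3$ the summands of $\widehat r_{k}$ are i.i.d.\ under $\Target_{k-1}$, so $\NbParticles\,\Var(\widehat r_{k})=\int\Target_{k-1}(\state)\,p(\Data|\state)^{2(\CoolSchedule_{k}-\CoolSchedule_{k-1})}\,d\state-r_{k}^{2}$; substituting $\Target_{k-1}=\UnNorTarget_{k-1}/\NormConst_{k-1}$, using the identity $\UnNorTarget_{k-1}(\state)\,p(\Data|\state)^{2(\CoolSchedule_{k}-\CoolSchedule_{k-1})}=\UnNorTarget_{k}^{2}(\state)/\UnNorTarget_{k-1}(\state)$, and dividing by $r_{k}^{2}=(\NormConst_{k}/\NormConst_{k-1})^{2}$ yields $v_{k}=\int\Target_{k}^{2}(\state)/\Target_{k-1}(\state)\,d\state-1$. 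The term $k=2$ comes from the initial importance-sampling estimate of $\NormConst_{2}$ (recall $\NormConst_{1}=\int p(\state)\,d\state=1$) with proposal $\PropDist_{1}$, and the same computation with $\Target_{k-1}$ replaced by $\PropDist_{1}$ gives $v_{2}=\int\Target_{2}^{2}(\state_{1})/\PropDist_{1}(\state_{1})\,d\state_{1}-1$. Summing and re-indexing ($k\mapsto k-1$ in the terms with $k\ge 3$) reproduces exactly \eqref{OptimalVar_ResamplingBefore}.

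The step I expect to be the main obstacle is making the decoupling fully rigorous: one must pin down precisely where the resampling steps are inserted (before each mutation, and how the initialization is treated), verify that successive mutations use fresh independent randomness, and handle the bookkeeping of the first factor $\widehat r_{2}$ — it is this initialization step, a reweighting from $\PropDist_{1}$ rather than from a previous target, that is responsible for $\PropDist_{1}$ (rather than $\Target_{1}$) appearing in the leading variance term; in the likelihood-tempered setting $\PropDist_{1}=\Target_{1}$ is the prior and the two coincide. A minor technical point is verifying the integrability conditions ($\TestFunction\in L^{2}(\Target_{\TimeIndex})$ and finiteness of $\int\Target_{k+1}^{2}/\Target_{k}$ and $\int\Target_{2}^{2}/\PropDist_{1}$) under which the individual CLTs hold; I take these to be implicit in the statement. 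Alternatively one could obtain Part 2 by specializing the general SMC-sampler central limit theorem to the perfectly mixing kernel, but the direct argument above is shorter here.
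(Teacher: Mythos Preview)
Your proposal is correct and follows essentially the same route as the paper's proof: both parts rely on the observation that perfect mixing plus resampling-before-sampling yields equally weighted i.i.d.\ populations at each iteration, so Part~1 is the ordinary CLT and Part~2 follows from decomposing $\log\widehat{\NormConst_{\TimeIndex}/\NormConst_{1}}$ into the sum of $\log\widehat r_{k}$, computing $\Var(\widehat r_{k})$ exactly as you do, and applying the delta method term by term. If anything, your treatment is slightly more careful than the paper's, since you make explicit the mutual independence of the $\widehat r_{k}$ (needed to sum the asymptotic variances) whereas the paper simply writes the decomposition and sums without comment.
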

\begin{proof} 
See Appendix.
\end{proof}

\noindent\textbf{Remark 1:} As expected, we can conclude from these results that even if a perfect mixing MCMC kernel is used, the variance of the \Correct{estimator associated with} the normalizing constant in Eq. (\ref{OptimalVar_ResamplingBefore}) still depends on all the sequence of target distributions as a cumulative sum of the discrepancy between two consecutive target distributions. 

In the next section, we will use this result in order to design an automatic procedure for the selection of the sequence of target distributions and more especially the evolution of the cooling schedule that defines completely this sequence in Eq. (\ref{SequenceSMCsampler}).

%%%%%%%%%%%%%%%%%%%%%%%%%%%%%%%%%%%%%%%%%%%%%%%%%%%%%%%%%%%%%%%%%%%%%%%%%%
%%%%%%%%%%%%%%%%%%%%%%%%%%%%%%%%%%%%%%%%%%%%%%%%%%%%%%%%%%%%%%%%%%%%%%%%%%
	\section{Adaptive Sequence of Target Distributions}\label{Sec_AdaptiveTarget}	
%%%%%%%%%%%%%%%%%%%%%%%%%%%%%%%%%%%%%%%%%%%%%%%%%%%%%%%%%%%%%%%%%%%%%%%%%%
%%%%%%%%%%%%%%%%%%%%%%%%%%%%%%%%%%%%%%%%%%%%%%%%%%%%%%%%%%%%%%%%%%%%%%%%%%

\subsection{Existing approaches}

Several statistical approaches have been proposed in order to automatically obtain such a schedule via the optimization of some criteria, which are known as \textit{on-line} schemes.  \cite{jasra2011} %and \cite{Schafer2013} 
		 proposed an adaptive selection method based on controlling the rate of the effective sample size ($\ESS_{\TimeIndex }$), defined in  (\ref{Eq_SMC_ESS}). 
\Correct{This scheme thus provides an automatic method to obtain the tempering schedule such that the $\ESS$ decays in a regular predefined way.  However, one major drawback of such an approach is that the $\ESS_{\TimeIndex }$ of the current sample weights corresponds to some empirical measure of the accumulated discrepancy between the proposal and the target distribution since the last resampling time. As a consequence, it does not really represent the dissimilarity between each pair of successive distributions   unless resampling is conducted after every iteration.
		 
		 In order to handle this problem,  \cite{Yan2013} proposed a slight modification of the $\ESS$, named the \textit{conditional} $\ESS$ ($\CESS$), by considering how good an importance sampling proposal  $\Target _{k,\TimeIndex-1 } $ would be for the estimation of expectation under $\Target _{\TimeIndex } $.  At the $t$-th iteration, this quantity  is defined as follows: }	
	
		\begin{equation}\label{Eq_CESS}
			\CESS_{\TimeIndex } =\left [\sum\limits_{i=1}^{\NbParticles }\NbParticles \NormISWeight_{\TimeIndex-1 } ^{(i)} \left (\dfrac{\UnNorIncreWeight _{\TimeIndex } ^{(i)}}{\sum_{j = 1}^{\NbParticles }\NbParticles \NormISWeight_{\TimeIndex-1 } ^{(j)}\UnNorIncreWeight _{\TimeIndex } ^{(j)}}\right )^{2}\right ]^{-1}=\dfrac{\left (\sum_{i=1}^{\NbParticles }\NormISWeight_{\TimeIndex-1 } ^{(i)}\UnNorIncreWeight _{\TimeIndex } ^{(i)}\right )^{2}}{\sum_{j= 1}^{\NbParticles }\frac{1}{\NbParticles }\NormISWeight_{\TimeIndex-1 } ^{(j)}(\UnNorIncreWeight _{\TimeIndex } ^{(j)})^{2}} .
		\end{equation}

		Nevertheless, by using either $\ESS$ or $\CESS$ criterion, the number of steps $\NbIterSMC $ of the SMC samplers completely depends on the complexity of the integration problem at hand and is not known in advance.  In other words, for  either fixed $\ESS^{\star}$ or fixed $\CESS^{\star}$, the associated sequence $\{\CoolSchedule _{\TimeIndex } \}_{\TimeIndex = 1}^{\NbIterSMC }$ is an on-line self-tuning parameter. Smaller values significantly speed up the Sequential Monte Carlo algorithm but lead to a higher variation in the results. Consequently, we are not able to control the total complexity of the algorithm, and it is typically impossible to obtain the comprehensive view of the behavior of the cooling schedule $\left \{  \CoolSchedule_{\TimeIndex }  \right\}$ a priori, instead one has to wait until the algorithm is completed. 

\subsection{Proposed adaptive cooling strategy}
\label{ProposedAdaptiveScheme}

	In this paper, we propose an alternative strategy to choose  the sequence of target distributions adaptively to the problem under study. In particular, we propose to \Correct{consider} the sequence of distributions which minimizes the variance of the particle approximation of the normalizing constant derived previously in Eq. (\ref{OptimalVar_ResamplingBefore}). This strategy is thus based on a global optimization of cooling schedule $\left \{  \CoolSchedule _{\TimeIndex } \right\} $ which enable us to control the complexity of the algorithm by determining before any simulation the number of SMC iterations $\NbIterSMC $. In this way we obtain what will be referred to as an \textit{off-line} scheme, and we will obtain the complete view of the cooling schedule performance before running the SMC sampler.
		
\subsubsection{\underline{Objective function and Optimization procedure}}
		
		By carrying out our criterion, we have to find $\NbIterSMC-1$ positive step lengths ${\bm \IncreStep}= \{\IncreStep _{\TimeIndex }\}_{\TimeIndex =2}^{\NbIterSMC }$ , defined as $\CoolSchedule  _{\TimeIndex }-\CoolSchedule  _{t-1}$ such that $\sum_{t=2}^T \IncreStep _{\TimeIndex } =1$, which minimize the asymptotic variance given in Eq. (\ref{OptimalVar_ResamplingBefore}). 
		\Correct{Here, we are aiming at finding
		\begin{eqnarray}
		\label{Eq_OptGamma}
		\widehat{\bm \IncreStep}=\left \{\widehat{\IncreStep} _{2 }, \ldots,\widehat{\IncreStep}  _{T } \right \}&=&\underset{\left \{\IncreStep  _{2 },  \ldots,\IncreStep  _{T }\right \}  }{\argmin}  \quad \sum\limits_{\TimeIndex =1}^{\NbIterSMC -1} \int \dfrac{\Target _{\TimeIndex +1}^{2}(\State _{\TimeIndex })}{\Target _{\TimeIndex }(\State _{\TimeIndex })}d\State _{\TimeIndex }-(\NbIterSMC -1) \\
		&&\text{subject to } \sum_{t=2}^T \IncreStep _{\TimeIndex } =1 \text{ and } \forall m=2,\ldots,T: \IncreStep_m\geq0 \nonumber
	\end{eqnarray}
	where
	\begin{equation}\label{Eq_NormalizedTargetDistribution}
		\Target _{\TimeIndex }(\theta)= \dfrac{p(y|\theta)^{\CoolSchedule  _{\TimeIndex }}p(\theta)}{\int p(y|\theta)^{\CoolSchedule  _{\TimeIndex }}p(\theta) d\theta}=\dfrac{p(y|\theta)^{\CoolSchedule  _{\TimeIndex }}p(\theta)}{Z_{\TimeIndex }} \text{ with } \CoolSchedule_t=\sum_{m=2}^t \IncreStep_m .
	\end{equation}}

	Equation (\ref{Eq_OptGamma}) involves $\NbIterSMC -1$ integrals and each integral  represents, as discussed in Section \ref{SMC_convergence}, a dissimilarity measure between each pair of successive distributions. The main difficulty in carrying out this construction is that these integrals are generally intractable, typically requiring approximation.%so numerical methods are required. 

In order to avoid the use of numerical methods to approximate the $\NbIterSMC-1$ integrals which could be very challenging to do if $\state$ is high-dimensional, we propose instead to approximate each target distribution $\Target _{\TimeIndex }(\State )$ by a multivariate normal distribution. Indeed, from the connection between these integrals and the R\'enyi divergence between two distributions, an analytical expression for the asymptotic variance to minimize can be obtained by using the following result \cite{Gil2013124}:

\vspace*{0.2cm}
\noindent\textit{For Gaussian multivariate distribution $f_{1}=\mathcal{N}(\mu_{1},\Sigma_{1})$ and  $f_{2}=\mathcal{N}(\mu_{2},\Sigma_{2})$ we have}
		\begin{equation}
			\begin{aligned}
				\int f^{\alpha}_{1} (x) f_{2}^{1-\alpha} (x) dx&= \dfrac{\det \left (\alpha\Sigma_{2}+(1-\alpha)\Sigma_{1}\right )^{-\frac{1}{2}}}{\det (\Sigma_{1})^{\frac{\alpha-1}{2}}\det(\Sigma_{2})^{-\frac{\alpha}{2}}} \\
				&\times \exp \left \{ \dfrac{\alpha (\alpha-1)}{2} (\mu_{1}-\mu_{2})^{T}(\alpha \Sigma_{2}+(1-\alpha)\Sigma_{1})^{-1}(\mu_{1}-\mu_{2})\right \}
			\end{aligned}
		\end{equation}
\noindent\textit{which is finite iff }$\alpha\Sigma_{1}^{-1}+(1-\alpha)\Sigma_{2}^{-1}$\textit{ is positive definite.}
\vspace*{0.2cm}

Finally, a nonlinear optimization technique, such as for example the Nelder-Mead algorithm \cite{NelderMead65}, can be used to solve this optimization in order to obtain the value $\widehat{\bm \IncreStep}$. %Let us now describe how the multivariate normal approximation of each target could be done in an efficient way.

\subsubsection{\underline{Normal approximations of intermediate target distributions}}

In order to find the value $\widehat{\bm \IncreStep}$ that minimizes the asymptotic variance of the estimate of the normalizing constant, we need to approximate the $\NbIterSMC$ intermediate target distributions, $\pi_\TimeIndex$ for $\TimeIndex=1,\cdots,\NbIterSMC$ by \Correct{ multivariate normal distributions, i.e.,:
\begin{equation}
\pi_{\TimeIndex}(\state)  \propto p(\Data|\State)^{\CoolSchedule_{\TimeIndex}} p(\State) \approx \mathcal{N}(\state|{\bm \mu}_{\TimeIndex},{\bm \Sigma}_{\TimeIndex}) .
\end{equation}
}
 In order to reduce the complexity associated with these $\NbIterSMC$ different normal\Correct{ approximations} of the intermediate target distributions \Correct{(which consists in finding both $\NbIterSMC$ mean vectors $\left\{{\bm \mu}_{\TimeIndex}\right\}_{\TimeIndex=1}^\NbIterSMC$ and  covariance matrices $\left\{{\bm \Sigma}_{\TimeIndex}\right\}_{\TimeIndex=1}^\NbIterSMC$)}, we propose to only approximate the prior and the posterior \Correct{$ \PostDist (\State|\Data)$} distribution and thus deduce all normal approximation required by using the convenient properties of the normal distribution.
 
 Indeed, approximating both the prior and the posterior by normal distributions with parameters $({\bm \mu}_p, {\bm \Sigma}_p)$ and $({\bm \mu}_{T}, {\bm \Sigma}_{T})$ respectively, leads to a normal likelihood  approximation with
		\begin{align}
		\begin{split}
			{\bm \Sigma}_l & = \left({\bm \Sigma}_{T}^{-1} -  {\bm \Sigma}_p^{-1}\right)^{-1} , \\
			{\bm \mu}_l & = {\bm \Sigma}_l  \left( {\bm \Sigma}_{T}^{-1}  {\bm \mu}_T -{\bm \Sigma}_{p}^{-1}  {\bm \mu}_p  \right) . \label{paralikelihood}
		\end{split}
	\end{align}
	Moreover, since a tempered normal is proportional to a normal with only a modification of the covariance and  also the product of 2 multivariate normals is a multivariate normal distribution, the $\TimeIndex $-th target distribution can therefore  be approximated by :
\begin{equation}
		\Target _{\TimeIndex }(\State)\approx \Normal(\State| {\bm \mu}_{\TimeIndex }, {\bm \Sigma}_{\TimeIndex }),
		\label{TargetApprox}
	\end{equation}
	with 
	\begin{align}
		\begin{split}
			{\bm \Sigma}_{\TimeIndex } & = \left( {\bm \Sigma}_p^{-1} +\CoolSchedule _{\TimeIndex } {\bm \Sigma}_l^{-1}  \right)^{-1} , \\
			{\bm \mu}_{\TimeIndex } & = {\bm \Sigma}_{\TimeIndex }  \left( {\bm \Sigma}_p^{-1}  {\bm \mu}_p +\CoolSchedule _{\TimeIndex } {\bm \Sigma}_l^{-1}  {\bm \mu}_l  \right). \label{ParaTargetApprox}
		\end{split}
	\end{align}
	
Only the prior and the posterior require normal approximations which can be performed using either  Laplace's method \cite{MacKay:2007fk} (which requires to be able to compute the first and second derivatives) or a simulation-based moment matching technique (\Correct{e.g.,} using random draws from a simple importance sampler).

%%%%%%%%%%%%%%%%%%%%%%%%%%%%%%%%%%%%%%%%%%%%%%%%%%%%%%%%%%%%%%%%%%%%%%%%%%
%%%%%%%%%%%%%%%%%%%%%%%%%%%%%%%%%%%%%%%%%%%%%%%%%%%%%%%%%%%%%%%%%%%%%%%%%%
\section{Scheme for Recycling all past simulated particles}\label{Sec_Recycling}		
%%%%%%%%%%%%%%%%%%%%%%%%%%%%%%%%%%%%%%%%%%%%%%%%%%%%%%%%%%%%%%%%%%%%%%%%%%
%%%%%%%%%%%%%%%%%%%%%%%%%%%%%%%%%%%%%%%%%%%%%%%%%%%%%%%%%%%%%%%%%%%%%%%%%%
After having proposed, in the previous section, a strategy in order to automatically specify the sequence of target distributions that will reduce the asymptotic variance of the estimator of the normalizing constant, we now focus on some strategies to improve the estimator of an expectation with respect to $\Target (\cdot)$.  

Using SMC samplers, this quantity is typically approximated with Eq. (\ref{FinalExpectationApproxSMC}), as:
\begin{equation}
J=\Expec_\Target \left[\TestFunction(\state) \right] = \int \Target (\State) \TestFunction (\State) d\State \approx \Expec_{\Target^N} \left[\TestFunction(\state) \right] = \sum_{i=1}^{\NbParticles } \NormISWeight_{\NbIterSMC }^{(i)} \TestFunction (\State_{\NbIterSMC }^{(i)}), 
\label{ClassicalExpecSMC}
\end{equation}
since $\Target _{\NbIterSMC }(\cdot)=\Target (\cdot)$. Only the samples from the iterations targeting the true posterior (generally only the last one) are taking into account for the approximation of the expectation. In this paper, in order to reduce the variance associated with this estimator in Eq. (\ref{ClassicalExpecSMC}), we propose two different strategies that will use particles drawn at the previous iterations of the sampler modified under a ``recycling principle''. We remark that these two recycling schemes are performed once the $T$ iterations of the SMC sampler are finished.

\subsection{Recycling based on Effective Sample Size}

\Correct{As discussed above, the  SMC approximation of the posterior expectation is typically only based on the samples from the last SMC iteration i.e. from $\pi_T$.} In order to have a more efficient estimation in the sense of minimizing the variance of the estimator in Eq. \eqref{ClassicalExpecSMC}, the idea we propose to explore in this section is to recycle all the particles that have been generated through the $\NbIterSMC$ iterations of the SMC sampler. This is challenging as intermediate samples from the sequence of distributions $\left\{\pi_t\right\}_{t = 1}^{T-1}$ do not target directly the posterior of interest $\pi_T$.

In \cite{Gramacy:2010gf}, a strategy has been proposed in order to recycle all the elements of the Markov chain obtained from a simulated tempering based MCMC algorithm. Here, we propose to adapt this approach to the $T$ collections of weighted samples given at each iteration of the SMC sampler. The idea is to correct each of these $T$ set of weighted random samples by using an importance sampling identity  to adjust these samples which are not drawn from the distribution of interest $\pi(\cdot)$.

More specifically, at the end of the $t$-th iteration of the SMC sampler, the weighted particle system approximates the target distribution $\pi_t(\cdot)$ as follows:
\begin{equation}
\pi_t^{N}(d\State) \approx \sum_{i=1}^{N} \NormISWeight_{t}^{(i)} \delta_{\State_t^{(i)}}(d\State) .
\end{equation}
However, in order to be able to use importance sampling identity, we need to have a set of unweighted samples from $\pi_t(\State)$. For this purpose, an unbiased resampling step that consists in selecting particles according to their importance weights can be used \cite{Kuensch2005}.  With a multinomial resampling scheme, we obtain a new collection %we will thus have:
\begin{equation}
\left\{ \StateT_t^{(i)} \right\}_{i=1}^{N}  \sim \pi_t(\State) ,
\end{equation}
where for $i=1,\ldots,N$
\begin{equation}
\StateT_t^{(i)}=\State_t^{(J_t^i)} \text{ ~~with~~ }  J_t^i \iid {\cal M}\left( \NormISWeight_{t}^{(1)}, \ldots,\NormISWeight_{t}^{(N)}\right) .
\label{ResamplingRecycling}
\end{equation}
Let us remark that if the resampling stage has been already performed at a specific iteration of the SMC sampler, the previous described steps are not necessary since the obtained samples are already \Correct{asymptotically} drawn from the target distribution $\pi_t(\cdot)$ (in this case, we set directly $\StateT_t^{(i)}=\State_t^{(i)}$ for $i=1,\ldots,N$). At the end of the SMC sampler, we have $\NbIterSMC$ collections of random samples drawn from each distribution of the targeted sequence. 

Since we  know the distribution from which these random samples $\left\{ \StateT_t^{(i)} \right\}_{i=1}^{N}$ are sampled,  an estimate of the expectation in (\ref{ClassicalExpecSMC}) can be obtained by using an importance sampling identity:
\begin{equation}
\widehat{h}_{\TimeIndex } = \sum_{j=1}^{\NbParticles  } \frac{w_{\WeightESS,\TimeIndex }(\StateT_{\TimeIndex }^{(j)})}{\sum_{i=1}^{\NbParticles } w_{\WeightESS,\TimeIndex }(\StateT_{\TimeIndex }^{(i)})} \TestFunction (\StateT_{\TimeIndex }^{(i)}) ,
\label{LocalESSestimator}
\end{equation}
with 
\begin{equation}
w_{\WeightESS,\TimeIndex }(\StateT_{\TimeIndex }^{(j)})=\frac{\UnNorTarget (\StateT_{\TimeIndex }^{(j)})}{\UnNorTarget _{\TimeIndex }(\StateT_{\TimeIndex }^{(j)})} ,
\label{CorrectedWeightsESS}
\end{equation}
where $\UnNorTarget(\cdot)$ and $\UnNorTarget_{\TimeIndex }(\cdot)$  are the unnormalized target distribution at the final iteration (\Correct{i.e.,} the posterior) and at the $t$-th iteration, respectively.

Finally, an overall estimator that will take into account all these estimators (or potentially a subset $\Omega$ among these $T$ estimates) can be obtained as follows:
\begin{equation}
\widehat{h} = \sum_{\TimeIndex \in \Omega} \lambda_{\TimeIndex }\widehat{h}_{\TimeIndex } ,
\label{CombiEstimator}
\end{equation}
where $0\leq \lambda_{\TimeIndex } \leq \sum_{ t \in \Omega} \lambda_{\TimeIndex }=1$.

As discussed in \cite{Gramacy:2010gf}, the combination coefficients $\lambda_{\TimeIndex }$ have to be chosen carefully if we do not want to have the variance of the estimator (\ref{CombiEstimator}) smaller than the one without recycling given in Eq. (\ref{ClassicalExpecSMC}). For example, a tempting solution is to take for $\TimeIndex \in \Omega$:
\begin{equation}
\lambda_{\TimeIndex }= \frac{W_{\WeightESS,\TimeIndex }}{W_{\WeightESS}}
\label{Naive}
\end{equation}
with $W_{\WeightESS,\TimeIndex }=\sum_{j=1}^{\NbParticles  } w_{\WeightESS,\TimeIndex }(\StateT_{\TimeIndex }^{(j)})$ and $W_{\WeightESS}=\sum_{\TimeIndex \in \Omega} W_{\WeightESS,\TimeIndex }$ but this can lead to very poor performance of the resulting estimator as illustrated empirically in the numerical simulation section in which we denote this choice by the ``naive'' recycling scheme. The solution proposed by \Correct{\cite{Gramacy:2010gf}} is thus to find all the $\lambda_{\TimeIndex }$ that maximizes the effective sample size of the weights of the entire population of particles. By combining Eqs. (\ref{CombiEstimator}) and (\ref{LocalESSestimator}), we can write:
\begin{equation}
\widehat{h} = \sum_{\TimeIndex \in \Omega} \sum_{j=1}^{\NbParticles  } \lambda_{\TimeIndex } \frac{w_{\WeightESS,\TimeIndex }(\StateT_{\TimeIndex }^{(j)})}{W_{\WeightESS,\TimeIndex }} \TestFunction (\StateT_{\TimeIndex }^{(i)}).
\label{CombiEstimatorEntirePopulation}
\end{equation}
The effective sample size of the entire population can then be defined as follows:
\begin{equation}
\ESS({\bm \lambda}_\Omega) = \left[ \sum_{\TimeIndex \in \Omega} \sum_{j=1}^{\NbParticles  } \left( \lambda_{\TimeIndex } \frac{w_{\WeightESS,\TimeIndex }(\StateT_{\TimeIndex }^{(j)})}{W_{\WeightESS,\TimeIndex }}\right)^2 \right]^{-1}.
\end{equation}
As a consequence, the set of coefficient ${\bm \lambda}_\Omega^*$ that maximize this effective sample size is the same as 
\begin{align}
		%\label{Eq_OptGamma}
		\begin{split}
		{\bm \lambda}_\Omega^* = \underset{{\bm \lambda}_\Omega  }{\argmin}  \quad \sum_{\TimeIndex \in \Omega} & \sum_{j=1}^{\NbParticles  } \left( \lambda_{\TimeIndex } \frac{w_{\WeightESS,\TimeIndex }(\StateT_{\TimeIndex }^{(j)})}{W_{\WeightESS,\TimeIndex }}\right)^2\\
		&\text{subject to } \sum_{\TimeIndex \in \Omega}  \lambda_{\TimeIndex }  =1.
		\end{split}
	\end{align}
By using Lagrangian multipliers, the optimal $\lambda_{\TimeIndex }^*$  within the SMC sampler framework  are defined, for $t\in \Omega$, by:
\begin{equation}
\lambda_{\TimeIndex }^*=\frac{l_{\TimeIndex }}{\sum_{n \in \Omega} l_n} \hspace{0.5cm}\text{ ~~with~~ } l_{\TimeIndex }=\frac{W_{\WeightESS,\TimeIndex }^2}{\sum_{j=1}^N w_{\WeightESS,\TimeIndex }(\StateT_{\TimeIndex }^{(j)})^2 }.
\label{ESSbased}
\end{equation}
Let us remark that the value $l_{\TimeIndex }$ involved in this optimal coefficients  $\lambda_{\TimeIndex }^*$  corresponds to the effective sample size of the $t$-th collection of importance weights given in (\ref{CorrectedWeightsESS}) and as a consequence $1\leq l_{\TimeIndex } \leq \NbParticles_{\TimeIndex }$.

\subsection{Recycling based on Deterministic Mixture Weights}
\label{DeMixRecyclingSection}

The previous solution is based on the combination of local estimators obtained by the collection of weighted particles from every iterations of the algorithm. In this section, we propose a new strategy that combines individual weighted particles by correcting their importance weights. This second scheme we propose is based on the principle, called the \textit{deterministic mixture} weight estimator proposed as in \cite{VeachGuibas95} and discussed by Owen and Zhou in \cite{Owen2000}. 

 This approach has been derived in order to combine weighted samples obtained from different proposal distributions in the importance sampler framework.  More recently, this technique has also been used in the Adaptive Multiple Importance Sampling (AMIS) of \cite{Cornuet2012} in order to recycle all past simulated particles in order to improve the adaptivity and variance of the Population Monte Carlo algorithm \cite{Cappe2004}. We propose to adapt this technique to the framework of the SMC sampler.

 As discussed in \cite{Owen2000}, using a deterministic mixture as a representation of the production of the simulated samples has the potential to exploit the most efficient proposals in the sequence $\PropDist_1 (\state ),\ldots,\PropDist_\NbIterSMC(\state )$ without rejecting any simulated value nor sample, while reducing the variance of the corresponding estimators. The poorly performing proposal functions are simply eliminated through the reduction of their weights and therefore their influence in:
 \begin{equation}
 \frac{\Target (\State_{\TimeIndex }^{(i)})}{\sum_{n=1}^{\NbIterSMC} c_n \PropDist_n(\State_{\TimeIndex }^{(i)})} ,
 \end{equation}
as $T$ increases (with $c_n=\NbParticles_n/\sum_{\TimeIndex =1}^{\NbIterSMC} \NbParticles_{\TimeIndex }$ is the proportion of particles drawn from the proposal $\PropDist _n$)\footnote{Here we assume the general case in which a different number of particles could be drawn at each iteration of the SMC sampler.}. Indeed, if $\PropDist _1$ is the poorly performing proposal, while the $\PropDist _n$'s ($n>1$) are good approximations of the target $\Target$, for a value  $\State_1^{(i)}$ such that $\Target(\State_1^{(i)})/\PropDist_1(\State_1^{(i)})$ is large, because  $\PropDist_1(\State_1^{(i)})$ is small (and not because it is a sample with high posterior value), $\Target (\State_{\TimeIndex }^{(i)})\diagup \{ c_1 \PropDist_1(\State_1^{(i)}) + \ldots + c_\NbIterSMC \PropDist_\NbIterSMC(\State_1^{(i)})  \}$ will behave like $\Target (\State_{\TimeIndex }^{(i)})\diagup \{ c_2 \PropDist_2(\State_1^{(i)}) + \ldots + c_\NbIterSMC \PropDist_\NbIterSMC(\State_1^{(i)})  \}$ and will decrease to zero as T increases.

 In our case, since we are not in the importance sampling framework with well defined proposal distribution but instead with  $\NbIterSMC$ collections of samples from the intermediate target distributions $\left( \left\{ \StateT_1^{(i)} \right\}_{i=1}^{\NbParticles_1},\ldots, \left\{\StateT_\NbIterSMC^{(i)}\right\}_{i=1}^{\NbParticles_\NbIterSMC}\right)$ by following the same resampling step as described in the previous section in Eq. (\ref{ResamplingRecycling}), the estimator of an expectation using this proposed deterministic mixture will be given by:
\begin{equation}
%\Expec_\Target \left[\TestFunction \right] \approx \sum_{\TimeIndex =1}^{\NbIterSMC}\sum_{i=1}^{\NbParticles  _{\TimeIndex }} \tilde{w}_{\TimeIndex }^{(i)} \TestFunction (\StateT_{\TimeIndex }^{(i)})
\Expec_\Target \left[\TestFunction \right] \approx \sum_{\TimeIndex =1}^{\NbIterSMC}\sum_{i=1}^{\NbParticles  _{\TimeIndex }} \frac{{w}_{\WeightDemix,\TimeIndex }^{(i)}}{\sum_{k =1}^{\NbIterSMC}\sum_{j=1}^{\NbParticles  _{k }} \hat{w}_{\WeightDemix,k }^{(j)} } \TestFunction (\StateT_{\TimeIndex }^{(i)}) ,
\end{equation}
with 
\begin{equation}
 {w}_{\WeightDemix,\TimeIndex }^{(i)} = \frac{\Target (\StateT_{\TimeIndex }^{(i)})}{\sum_{n=1}^{\NbIterSMC} c_n \Target _n(\StateT_{\TimeIndex }^{(i)})} ,
 \label{SecondSolution}
\end{equation}
where $c_n=\NbParticles_n/\sum_{\TimeIndex =1}^{\NbIterSMC} \NbParticles_{\TimeIndex }$ is the proportion of particles drawn from $\Target _n$ amongst all the simulated particles. The problem with this strategy is we need to evaluate the target $\Target _{\TimeIndex }(\cdot)$ exactly (not up to a constant) and thus we need to know the normalizing constant $\NormConst_{\TimeIndex }$ involved in all the intermediate target distributions $\Target _{\TimeIndex }(\cdot)=\gamma_{\TimeIndex }(\cdot)/\NormConst_{\TimeIndex }$. Hence, at his point the idea we propose is to use the (unbiased) SMC approximation of each normalizing constant given by Eq. (\ref{Eq_Approx_Ratio_NormalizingConstantZ0Zt}). As a consequence, the weights of this proposed recycling scheme, defined originally in  Eq (\ref{SecondSolution}), is thus approximated by:
\begin{equation}
 {w}_{\WeightDemix,\TimeIndex }^{(i)} \approx \frac{\gamma(\StateT_{\TimeIndex }^{(i)})}{\sum_{n=1}^{\NbIterSMC} c_n \gamma_n(\StateT_{\TimeIndex }^{(i)})\hat{\NormConst }_n^{-1}} .
\end{equation}

%%%%%%%%%%%%%%%%%%%%%%%%%%%%%%%%%%%%%%%%%%%%%%%%%%%%%%%%%%%%%%%%%%%%%%%%%%
%%%%%%%%%%%%%%%%%%%%%%%%%%%%%%%%%%%%%%%%%%%%%%%%%%%%%%%%%%%%%%%%%%%%%%%%%%
\section{Numerical Simulations}\label{Sec_Simus}	
%%%%%%%%%%%%%%%%%%%%%%%%%%%%%%%%%%%%%%%%%%%%%%%%%%%%%%%%%%%%%%%%%%%%%%%%%%
%%%%%%%%%%%%%%%%%%%%%%%%%%%%%%%%%%%%%%%%%%%%%%%%%%%%%%%%%%%%%%%%%%%%%%%%%%

In this section, the performances of the proposed strategies used to improve the SMC sampler-based estimators are assessed through two different models and also for a class of models used widely in signal processing, namely penalized regression. In the remainder of this paper, we adopt a parametric form for the temperature schedule: $\CoolSchedule  _{\TimeIndex }=h( \TimeIndex;\ParametricParameter ,\NbIterSMC )$, which satisfies the following conditions: $\{\CoolSchedule  _{\TimeIndex }\}$ is non-decreasing function, $\CoolSchedule  _{0}=0$ and $\CoolSchedule  _{\NbIterSMC  }=1$.  This efficiently reduces the optimization problem to a univariate problem of finding the optimal value for an unique parameter $\ParametricParameter$ instead of $\NbIterSMC-1 $ parameters $\{\IncreStep_{\TimeIndex } \}_{\TimeIndex =2}^{\NbIterSMC }$.	The parametric function used for the proposed adaptive cooling schedule strategy  is defined as:
		\begin{equation}
	\phi_t=\frac{\exp(\gamma t/T)-1}{\exp(\gamma)-1}.
	\label{Chap2ParametricCooling}
	\end{equation}

%%%%%%%%%%%%%%%%%%%%%%%%%%%%%%%%%%%%%%%%%%%%%%%%%%%%%%%%%%%%%%%%%%%%%%%%%%
\subsection{Model 1: Linear and Gaussian Model}
%%%%%%%%%%%%%%%%%%%%%%%%%%%%%%%%%%%%%%%%%%%%%%%%%%%%%%%%%%%%%%%%%%%%%%%%%%
	
Let us firstly consider a linear and Gaussian model for which the a posteriori distribution as well as the marginal likelihood can be derived analytically. \Correct{The} proposed strategies \Correct{can} thus be compared with the optimal Bayesian inference method. \Correct{More precisely, we assume}

\begin{align}
\begin{split}
p(\State) & =  \Normal (\State|{\bm \mu}, {\bm \Sigma}),  \\
p(\Data|\State) & =  \Normal (\Data|{\bm H}\State, {\bm \Sigma}_y) .
\end{split}
\end{align}

For this model, the posterior distribution is given by $p(\State|\Data)=\Normal (\State|{\bm \mu}_p, {\bm \Sigma}_p)$ 
with 
\begin{align}
\label{TruePosteriorMeanNormal}
\begin{split}
{\bm \mu}_p &= {\bm \mu}+{\bm \Sigma} {\bm H}^T \left( {\bm H} {\bm \Sigma} {\bm H}^T + {\bm \Sigma}_y\right)^ {-1} \left[\Data- {\bm H}{\bm \mu} \right] ,\\
{\bm \Sigma}_p &=  \left( {\bm I}_\DimState -  {\bm \Sigma} {\bm H}^T \left( {\bm H} {\bm \Sigma} {\bm H}^T + {\bm \Sigma}_y\right)^ {-1}  {\bm H} \right)  {\bm \Sigma}.
\end{split}
\end{align}
In addition, the marginal likelihood (i.e. the normalizing constant) is also know in closed form:
\begin{equation}
p(\Data) = \Normal ( \Data|{\bm H}{\bm \mu} , {\bm H} {\bm \Sigma} {\bm H}^T + {\bm \Sigma}_y).
\end{equation}

\vspace*{0.3cm}

For illustration, we select ${\bm \Sigma}=10{\bm I}_{10}$ and ${\bm \mu}={\bm 0}_{10\times 1}$ for the prior distribution. Concerning the likelihood parameters, all the elements of the transition matrix have been randomly generated using a standard normal distribution and $ {\bm \Sigma}_y={\bm I}_{\DimObs}$ with a varying number of observations $\DimObs$. \Correct{Regarding to} the SMC sampler, and in particular for the adaptive MWG (summarized in Algo. \ref{Algo_AMWG}), we use for the forward kernel: $\NbMCMCmove=5$ and $B=5$.

\subsubsection{Analysis of the proposed \Correct{adaptive} cooling schedule}

Under this model, the proposed approach we develop is optimal (in the sense of minimizing the asymptotic variance of the normalizing constant), since each intermediate target distribution is a multivariate normal distribution. In Fig. \ref{fig:Chap2:EvolAsympVariance}, the evolution of the theoretical asymptotic variance of the normalizing constant \Correct{estimator} defined in Eq. (\ref{OptimalVar_ResamplingBefore}) with the parameter value $\ParametricParameter$ is depicted for the SMC sampler as a function of the number of iterations $T$ and for different number of observations. We can clearly see that an optimal value exists for the parametric function of the cooling schedule, that will minimize the asymptotic variance. 

In Fig. \ref{fig:Chap2:CheckAsympVariance}, we compare the theoretical asymptotic variances of the normalizing constant with the ones obtained by simulation. In order to obtain these results, we have run 500 times an SMC sampler that utilizes a perfect mixing forward kernel which can  be straightforwardly obtained  analytically for this specific model, i.e.:
\begin{eqnarray}
K_{\TimeIndex}(\state_{\TimeIndex-1},\state_{\TimeIndex})& =&\pi_{\TimeIndex}(\state_{\TimeIndex})\propto p(\Data|\State)^{\CoolSchedule_{\TimeIndex}} p(\State) \nonumber \\
&=& \Normal ( \State|{\bm \mu}_{\TimeIndex} ,{\bm \Sigma}_{\TimeIndex} ) ,
\label{PerfectMixingNormalCase}
\end{eqnarray}
with 
\begin{equation}
{\bm \mu}_{\TimeIndex} = {\bm \mu}+{\bm \Sigma} {\bm H}^T \left( {\bm H} {\bm \Sigma} {\bm H}^T + \frac{1}{\CoolSchedule_{\TimeIndex}} {\bm \Sigma}_y\right)^ {-1} \left[\Data- {\bm H}{\bm \mu} \right] , 
\end{equation}
\begin{equation}
{\bm \Sigma}_{\TimeIndex}  =  \left( {\bm I}_\DimState -  {\bm \Sigma} {\bm H}^T \left( {\bm H} {\bm \Sigma} {\bm H}^T +\frac{1}{\CoolSchedule_{\TimeIndex}} {\bm \Sigma}_y\right)^ {-1}  {\bm H} \right)  {\bm \Sigma} .
\end{equation}
From \Correct{Fig. \ref{fig:Chap2:CheckAsympVariance}}, we can see that the variance of the normalizing constant \Correct{estimator for the} SMC sampler with \Correct{a} finite number of particles is very close to the theoretical asymptotic value. Indeed, only few  particles are required to reach these asymptotic variances under this model.

\begin{figure}[!htb]
\centering
\subfloat[20 Observations]{
\includegraphics[width=0.5\textwidth]{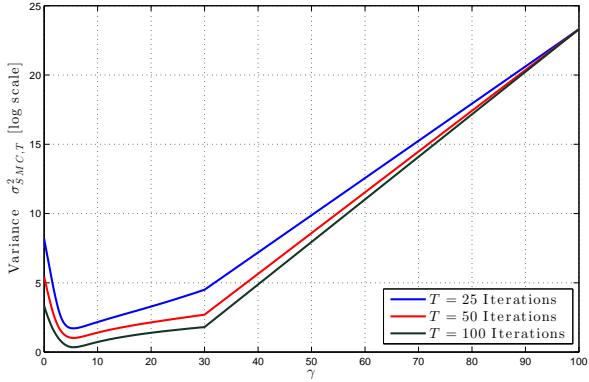} \label{fig:Chap2:EvolAsympVariance20Obs}}
\subfloat[30 Observations]{
\includegraphics[width=0.5\textwidth]{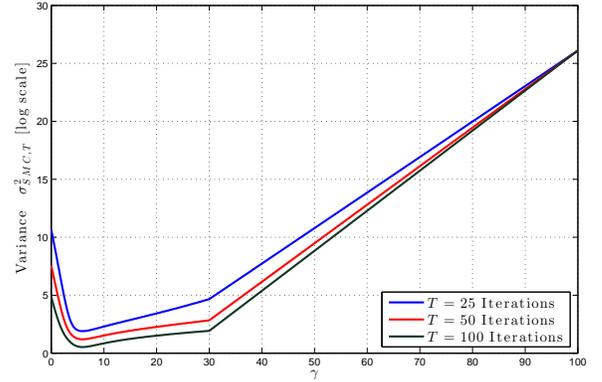} \label{fig:Chap2:EvolAsympVariance30Obs}}
\caption{Evolution of the theoretical asymptotic variance of the SMC sampler estimate of the normalizing constant versus the value of $\gamma$ in the cooling schedule as the function of the numbers of iterations for different number of observations}
\label{fig:Chap2:EvolAsympVariance}
\end{figure}

\begin{figure}[!htb]
\centering
\subfloat[20 Observations]{
\includegraphics[width=0.5\textwidth]{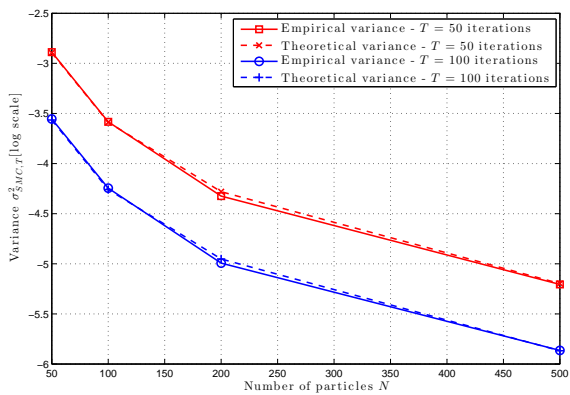} }
\subfloat[30 Observations]{
\includegraphics[width=0.5\textwidth]{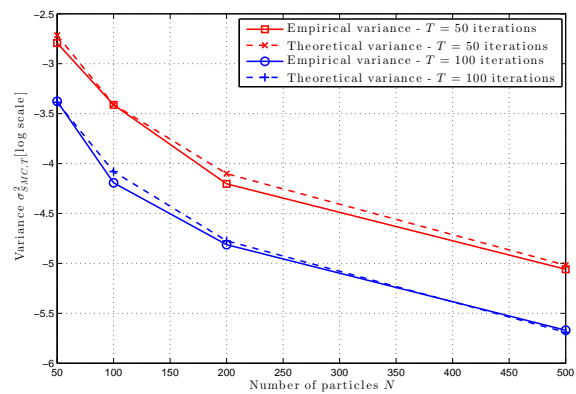} }
\caption{Comparison of the theoretical asymptotic variances (dashed lines) and the empirical ones from SMC sampler using perfect mixing Markov Kernel (solid lines) by using the optimal value of the parameter $\hat{\ParametricParameter}$ with 20 Observations.}
\label{fig:Chap2:CheckAsympVariance}
\end{figure}

Fig. \ref{fig:Chap2:ComparisonVariance} compares the proposed methodology for the optimal adaptive cooling schedule versus two others competitors, the one based on the $\CESS$ of  \cite{Yan2013} and the linear cooling schedule often used in practice. Performances of the SMC samplers are illustrated for two choices of mutation kernels: the perfect mixing kernel (Eq. \ref{PerfectMixingNormalCase}) or the adaptive random walk Metropolis Hastings kernel. These results clearly show the benefit of using such adaptive cooling schedule  - a bad choice can lead to a very poor estimate which has a large variance. Variance results obtained from the proposed approach and the $\CESS$-based approach are comparable. The main advantage of our proposed approach is that we control the global complexity of the SMC sampler since we set the number of iterations of the SMC sampler whereas in the $\CESS$-based strategy, the number of iterations of the SMC samplers will depend on the problem under consideration as well as the predefined value of $\CESS$. In order to be able to compare  both approaches with the same complexity, several runs of the SMC sampler with different values of the $\CESS$ have been performed to obtain the  $\CESS$ value that roughly leads to a specific number of iterations $\NbIterSMC$ (25, 50 and 100). 

\begin{figure}[!htb]
\centering
\subfloat[\footnotesize Perfect Mixing Kernel]{
\includegraphics[width=0.5\textwidth]{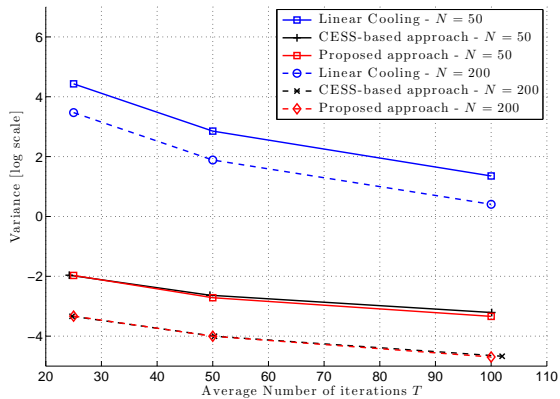}}
\subfloat[\footnotesize Adaptive MWG kernel]{
\includegraphics[width=0.5\textwidth]{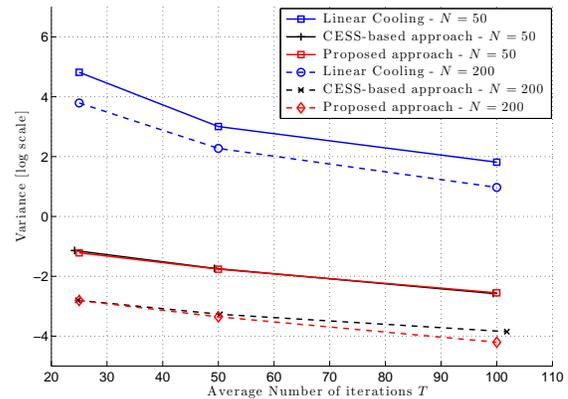}}
\caption{Comparison of the different cooling schedule strategies in terms of the variance of the normalizing constant estimate for different number of particles. Results are obtained with the use of either the perfect mixing Kernel (left) or the adaptive MWG kernel (right).}
\label{fig:Chap2:ComparisonVariance}
\end{figure}

\subsubsection{Analysis of the proposed recycling schemes}

We finally assess the performance of the two proposed recycling schemes. In order to analyze the potential gain of recycling past simulated particles, four different estimators based on the output of the SMC sampler are compared: \textit{``no recycling''} given in Eq. (\ref{ClassicalExpecSMC}),  \textit{``Na\"ive recycling''} and  \textit{``ESS-based recycling''} given in Eq. (\ref{CombiEstimator}) with $\lambda_{t}$ defined respectively in Eq (\ref{Naive}) and(\ref{ESSbased}) and the \textit{``DeMix-based recycling''} described in Section \ref{DeMixRecyclingSection}.

Fig.  \ref{fig:Chap2:MSE_PosteriorMean_Normal} shows the mean squared error (MSE) between the estimated posterior mean and the true posterior mean given by  ${\bm \mu}_p$ in Eq. (\ref{TruePosteriorMeanNormal}). We can firstly remark from these results that the na\"ive recycling scheme does not really improve the performance of the estimator of the posterior mean. On the contrary, both our proposed \Correct{schemes outperform} significantly this na\"ive recycling and the classical estimator that uses only the final population of particles (No recycling scheme). The improvement increases with the number of iterations used in the SMC sampler, as expected since more collection of particles can be recycled in the estimator. These results demonstrate also empirically for this model the superiority of the DeMix recycling approach.

\begin{figure}[!htb]
\centering
\subfloat[20 Obs.]{
\includegraphics[width=0.5\textwidth]{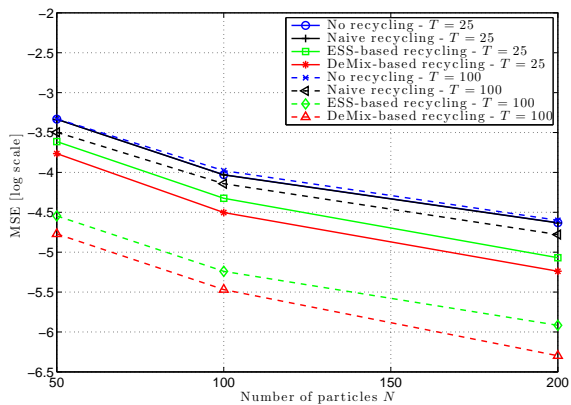}}
\subfloat[40 Obs.]{
\includegraphics[width=0.5\textwidth]{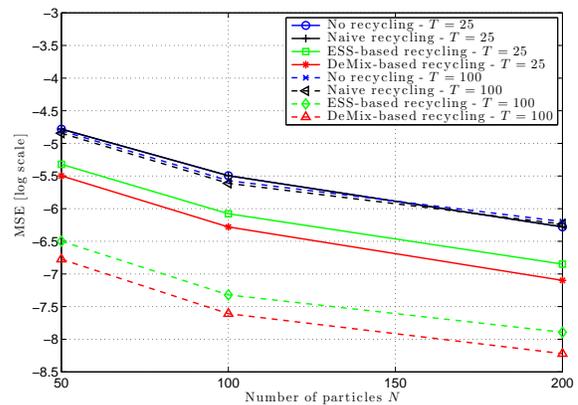}}
\caption{Mean square error between the estimated and the true posterior mean for Model 1 using the different recycling schemes}
\label{fig:Chap2:MSE_PosteriorMean_Normal}
\end{figure}

%%%%%%%%%%%%%%%%%%%%%%%%%%%%%%%%%%%%%%%%%%%%%%%%%%%%%%%%%%%%%%%%%%%%%%%%%%
\subsection{Model 2: Multivariate Student's t Likelihood}
%%%%%%%%%%%%%%%%%%%%%%%%%%%%%%%%%%%%%%%%%%%%%%%%%%%%%%%%%%%%%%%%%%%%%%%%%%	
	
	In this second example, we illustrate the results of our optimal schedule and recycling methods with a multivariate Student's t distribution as likelihood:
\begin{eqnarray}
p(\State) & = & \Normal (\State|{\bm \mu}, {\bm \Sigma}) \nonumber \\
p(\Data|\State) & = &\frac{\Gamma\left(\frac{\nu + \DimObs}{2}\right)}{\Gamma\left(\frac{\nu}{2}\right)(\nu \pi)^{\DimObs/2}}|{\bm \Sigma}_l|^{-\frac{1}{2}}\left[1 + \frac{[\bm{y} - {\bm H}\State]^T{\bm \Sigma}_l^{-1}[\bm{y} - {\bm H}\State]}{\nu}\right]^{-\frac{(\nu + \DimObs)}{2}} \nonumber%, \;\; \bm{y} \in \mathbb{R}^d.
\label{ModelStudentT}
\end{eqnarray}

This model can be particularly challenging due to the possibility of having a multimodal target posterior when contradictory observations are used. To analyze the performance of the proposed scheme in complex situation, we use 
$${\bm H}=\begin{bmatrix}
1 & 1 & 0 & 0 \\
0 & 0 & 1 & 1 
\end{bmatrix}^T$$
 ${\bm \Sigma}=20 {\bm I}_2$, ${\bm \mu}={\bm 0}_{2\times 1}$, ${\bm \Sigma}_l=0.1 {\bm I}_4$ and we observe a vector $\Data=\begin{bmatrix}\OneData_{1} & \OneData_{2} & \OneData_{3}  & \OneData_{4}\end{bmatrix}^T=\begin{bmatrix}8 & -8 & 8 & -8\end{bmatrix}^T$. These particular choices lead to an highly  multimodal posterior distribution as illustrated in Fig. \ref{fig:Chap2:PosteriorStudent} for two different values of the degree of freedom of the multivariate Student's t likelihood. From \Correct{this model} and the  parameters used, we have for both values of the degree of freedom $\Expec_\Target \left[ \State \right]=[0 ~ 0]^T$,
which is confirmed by the numerical evaluation of the posterior shown in Fig. \ref{fig:Chap2:PosteriorStudent}. For this model, we will follow the same procedure as in the previous Model: analysis of proposed adaptive cooling schedule and then of the proposed recycling schemes. In all the numerical simulations presented in this section, we have chosen  $\NbMCMCmove=10$ and $B=2$ as parameters of the adaptive MWG kernel within the SMC sampler.

	\begin{figure}[!htb]
\centering
\subfloat[$\nu=0.2$]{
\includegraphics[width=0.5\textwidth]{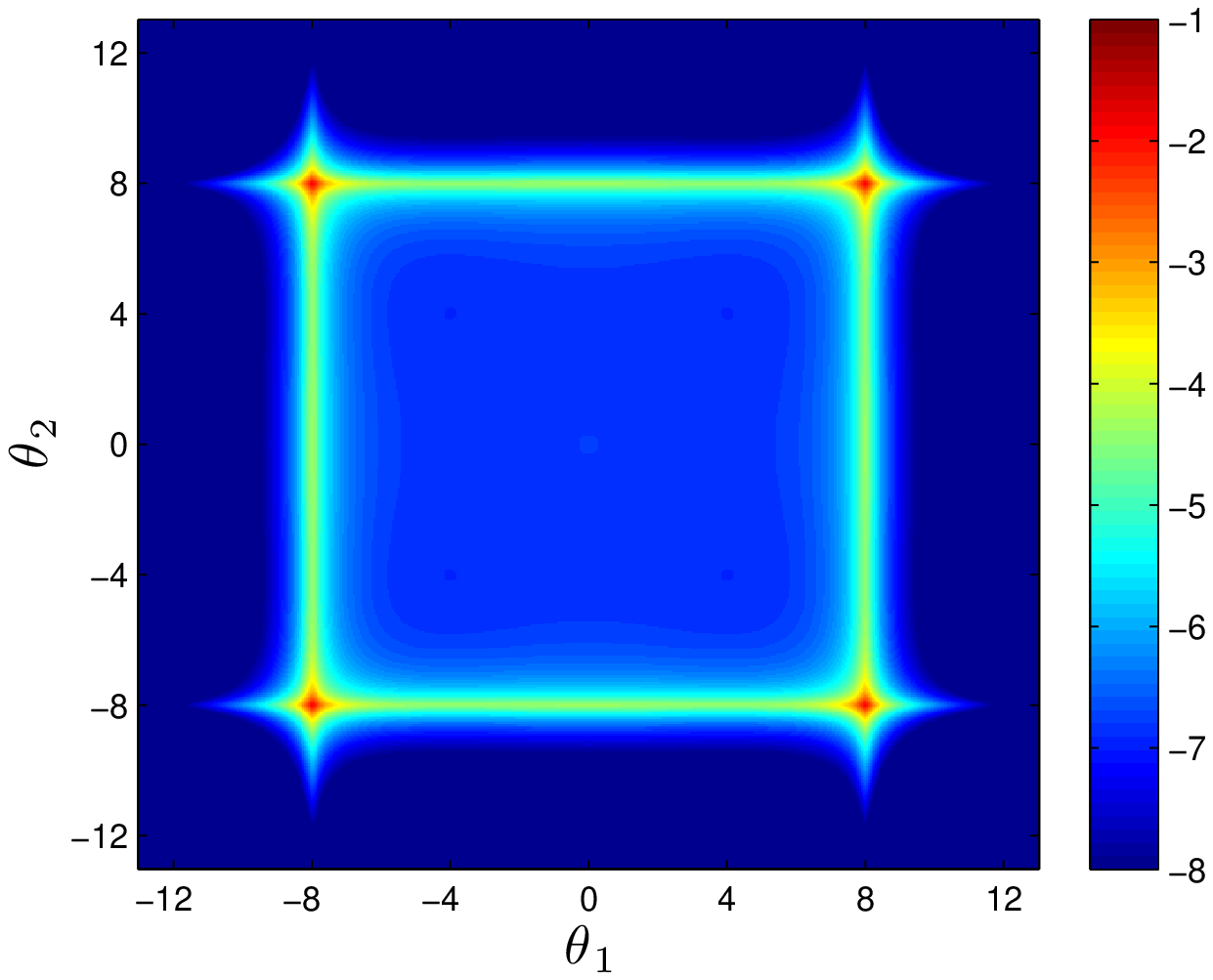} \label{fig:Chap2:PosteriorStudentSmall}}
\subfloat[$\nu=7$]{
\includegraphics[width=0.5\textwidth]{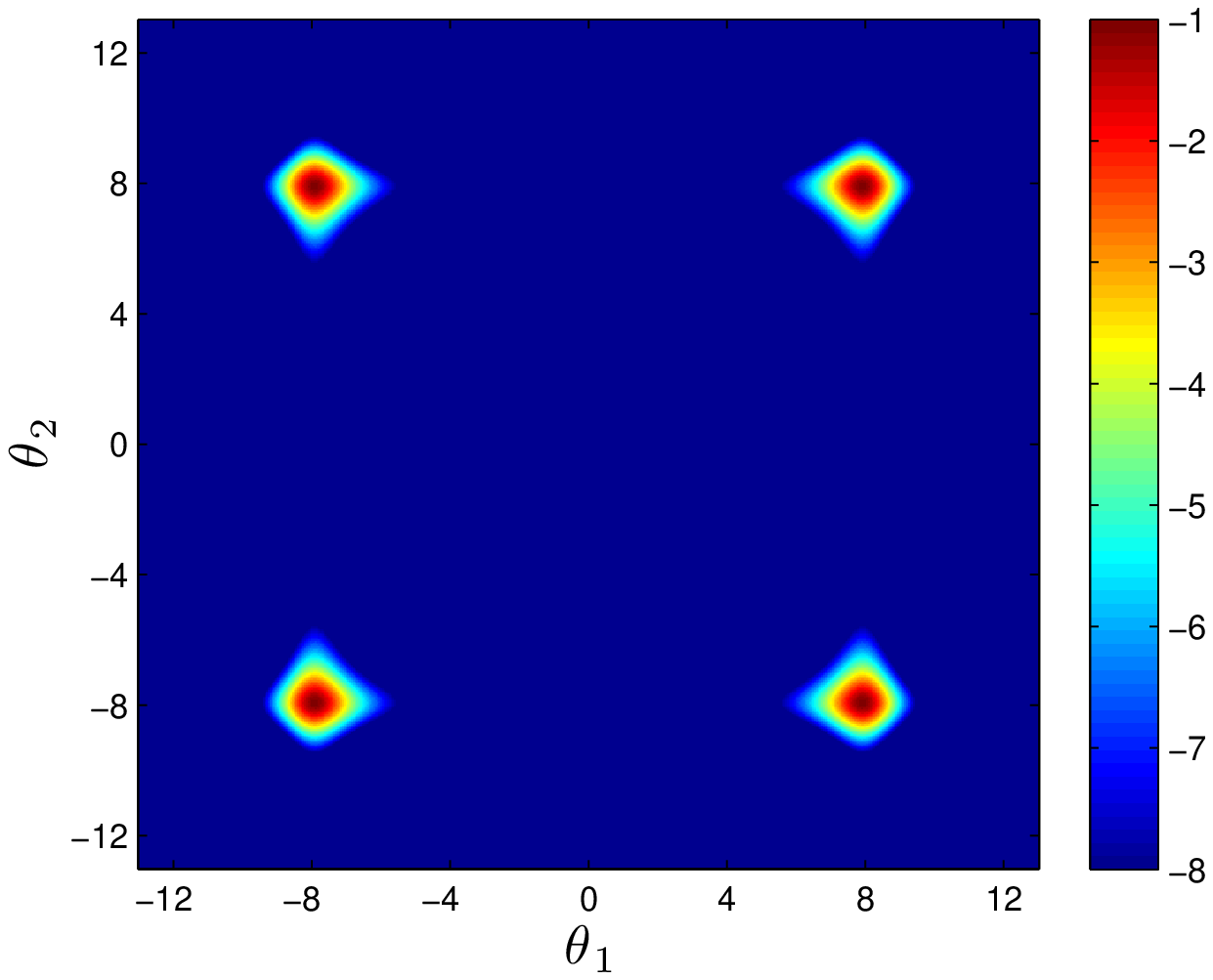} \label{fig:Chap2:PosteriorStudentLarge}}%}}
\caption{Target posterior distribution $p(\State|\Data)$ in log scale evaluated on a grid with 2 different values for the degree of freedom of the Student's t likelihood}
\label{fig:Chap2:PosteriorStudent}
\end{figure}

Table \ref{table:chap2:VarNormConstSmallDegree} shows the variance of the estimated normalizing constant (\Correct{i.e.,} $p(\Data)$)  when the degree of freedom of the multivariate Student's t distribution is $\nu=0.2$ and $\nu=7$, respectively. We compare the results obtained using different cooling schedules. The proposed adaptive approach, the  $\CESS$-based one as well as the linear cooling schedule \Correct{yield} similar results. From our \Correct{simulation} results, we can see that the proposed adaptive procedure takes a very small value (close to 0) as optimal value for $\gamma$ which thus \Correct{leads to}  the linear cooling schedule. Same remark when we analyze the evolution of the temperature given by the ``on-line'' $\CESS$-based strategy. Nevertheless, we can see that these variances can degrade very significantly if another value of $\gamma$ is chosen (here we take $\gamma=6$). This clearly demonstrates the impact of this temperature schedule in terms of the variance of the normalizing constant. The proposed procedure is thus of great interest in order to automatically decide what should be the evolution of this cooling schedule for a given number of SMC iterations.

\begin{table}[!htb]%[!h]
\begin{footnotesize}
\begin{center}
\begin{tabular}{cc|c|c|c|c|c|c|c|c|c}%>{\bf}c}
& & \multicolumn{2}{c|}{Linear}  & \multicolumn{2}{c|}{Cooling}   &  \multicolumn{2}{c|}{CESS}  & \multicolumn{2}{c}{Proposed}\\
&& \multicolumn{2}{c|}{Cooling} &   \multicolumn{2}{c|}{$\gamma=6$} & \multicolumn{2}{c|}{Approach} & \multicolumn{2}{c}{Approach} \\
&& \multicolumn{2}{c|}{$\gamma\rightarrow 0$} &  \multicolumn{2}{c|}{} &  \multicolumn{2}{c|}{} & \multicolumn{2}{c}{} \\
&& $\nu=0.2$ & $\nu=7$ &$\nu=0.2$ & $\nu=7$ & $\nu=0.2$ & $\nu=7$ & $\nu=0.2$ & $\nu=7$ \\
\hline
 & $N=50$ & 0.0026 & 0.0146 & 0.0110 & 0.0375 & 0.0028 & 0.0177 & 0.0030 & 0.0209 \\
 \multirow{1}{*}{$T=$25 Iter.}   & $N=100$ & 0.0013& 0.0086 & 0.0055&0.0152 & 0.0012 & 0.0079 & 0.0015  & 0.0088 \\   
    & $N=200$ &  0.0006 &  0.0050 &  0.0024  &  0.0090 & 0.0007 & 0.0041 &0.0008  & 0.0042\\ 
    \hline
 & $N=50$ & 0.0011 & 0.0105 &   0.0046  &  0.0160  & 0.0016  & 0.0078 & 0.0013 & 0.0072 \\   
 \multirow{1}{*}{$T=$50 Iter.}   & $N=100$ & 0.0007 & 0.0050  &  0.0029  &  0.0102 & 0.0006 & 0.0037   & 0.0006 & 0.0039 \\   
    & $N=200$ &  0.0003   &  0.0028 &  0.0012  &  0.0043    &  0.0004  &  0.0025 & 0.0004 & 0.0017\\  
        \hline
 & $N=50$ &  0.0006 &  0.0047  &   0.0026 &   0.0078 & 0.0006& 0.0051 & 0.0009& 0.0037\\  
 \multirow{1}{*}{$T=$100 Iter.}  & $N=100$ & 0.0003 &  0.0022 &   0.0013 &   0.0044  & 0.0003  & 0.0022  & 0.0004 & 0.0023  \\  
    & $N=200$ &   0.0002 &   0.0016 &   0.0005 &  0.0026  & 0.0002 & 0.0010 &  0.0002  &  0.0013\\ 
\end{tabular}
\end{center}
\end{footnotesize}
\caption{Comparison of the variance of the normalizing constant \Correct{estimator} obtained by using different cooling schedules for Model 2 with $\nu=0.2$ and $\nu=7$.}
\label{table:chap2:VarNormConstSmallDegree}
\end{table}

\Correct{Fig. \ref{fig:Chap2:MSE_PosteriorMean_Student}} shows the mean squared error between the estimated posterior mean from the proposed recycling scheme and the true one. Unlike the previous model (linear and Gaussian one) \Correct{for} $\nu=0.2$,  the na\"ive recycling outperforms the classical estimator of the SMC sampler when only the last collection of particles is used. This could be explained by the shape of the target posterior (Fig. \ref{fig:Chap2:PosteriorStudent}). Indeed, in such a case, the posterior has a large region with \Correct{a} non-zero probability  in the middle of the ``square''. As a consequence, the particles of the first iteration of the SMC sampler that target the prior \Correct{can} be very useful. However, when the degree of freedom of the likelihood is high ($\nu=7$), this remark does not hold since the posterior is really concentrated on 4 modes. From this case, it is also interesting to see that \Correct{the} MSE increases with the number of iterations used in the SMC sampler when either no recycling or na\"ive recycling is performed. Indeed, by increasing the number of iterations, we increase also the number of potential resampling steps and we know that during the resampling procedure, some particles which are currently \Correct{located} in one of the 4 modes can be discarded. Therefore it becomes very difficult for the SMC sampler to jump between two well separated modes, thus leading to an unexplored mode by the SMC sampler for the next iteration. This effect does not appear with the proposed recycling scheme since we recycle all the past simulated particles.

	\begin{figure}[!htb]
\centering
\subfloat[$\nu=0.2$]{
\includegraphics[width=0.5\textwidth]{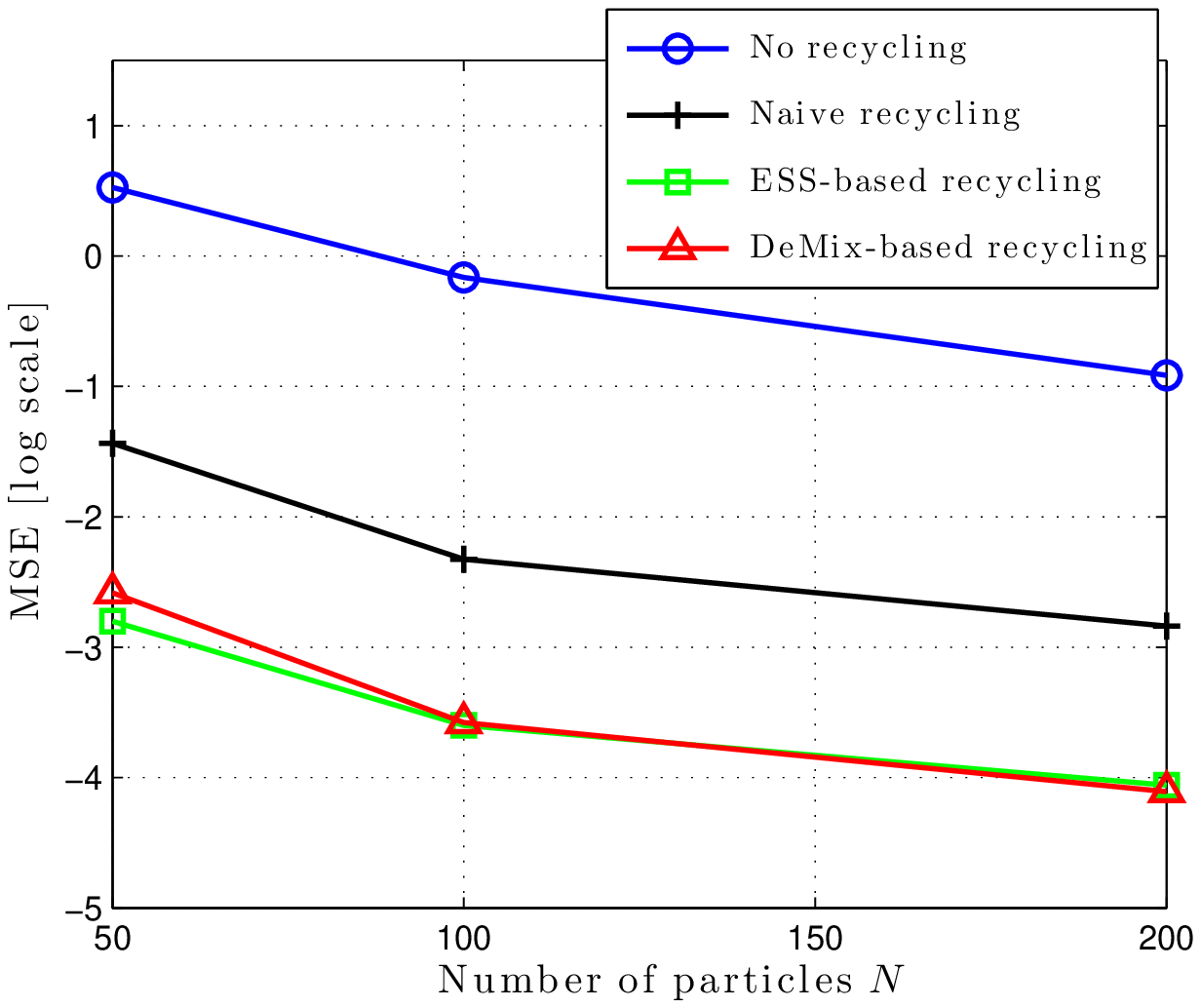}} 
\subfloat[$\nu=7$] {
\includegraphics[width=0.5\textwidth]{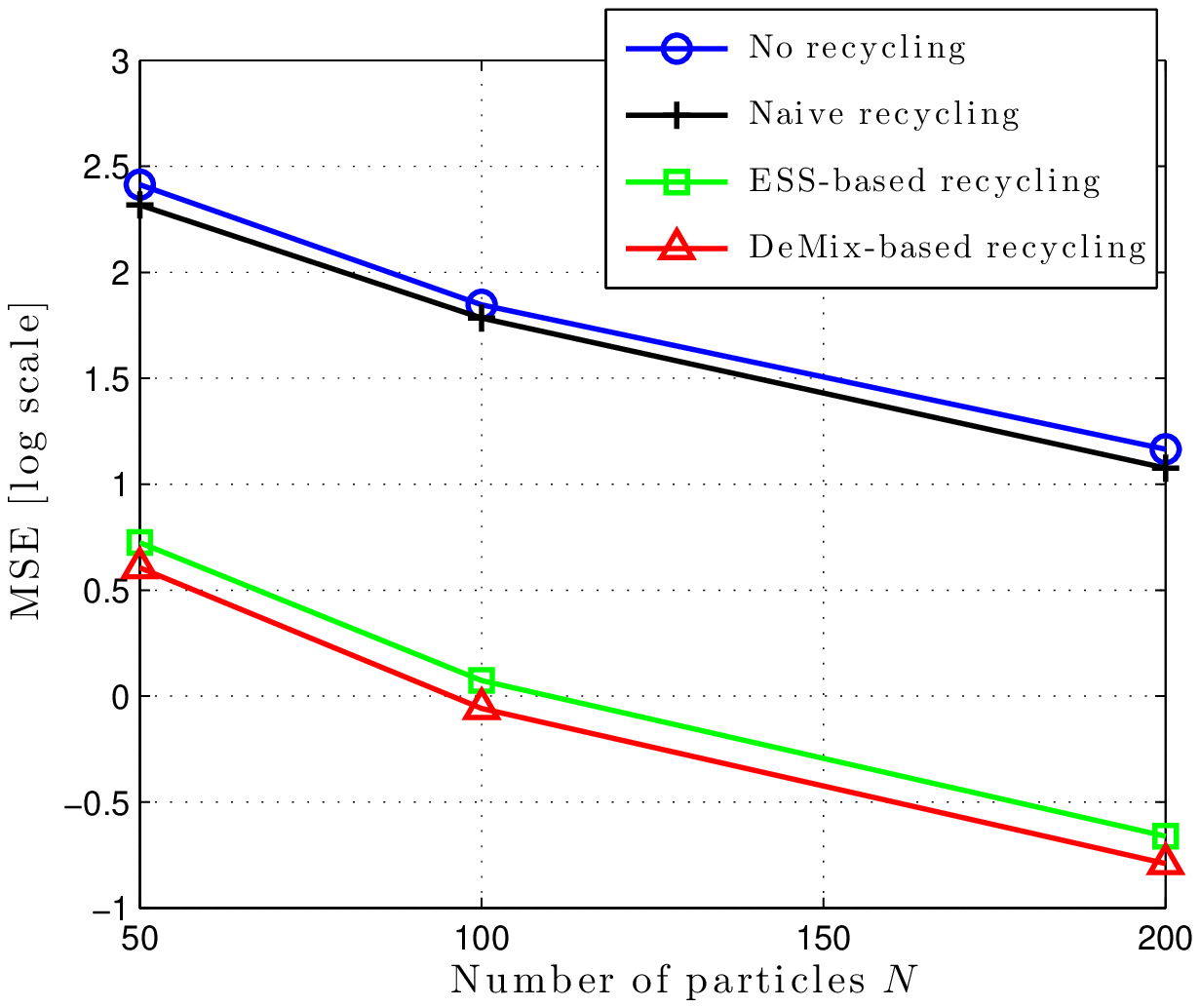}}
\caption{Mean squared error between the estimated and the true posterior mean for Model 2 using the different recycling schemes with $T=$100 Iterations.}
\label{fig:Chap2:MSE_PosteriorMean_Student}
\end{figure}

	Finally, in order to emphasize the significant gain that could be obtained using our proposed recycling schemes, Table \ref{tablechap2KSdistanceSmallDegree} shows the mean and standard deviation of the Kolmogorov-Smirnov distance defined as %\begin{equation}
	$D=\sup_{\theta_1} \left| F^N(\theta_1) -  F(\theta_1) \right|$ 
	%\end{equation}
	where $F^N$ and $F$ are the empirical cumulative distribution obtained from the SMC sampler and the true posterior cumulative distribution, respectively. This distance $D$ is obtained through 100 runs of the SMC samplers. Compared to the previous \Correct{comparisons related to}  the MSE of the posterior mean, this measure give us some information about the quality of the approximation of the whole target distribution. In order to obtain \Correct{these} results, the true target cumulative distribution $F(\theta_1)$ has been obtained numerically by using a very fine grid. In both cases ($\nu=0.2$ and $\nu=7$), these results empirically demonstrate the significant gain obtained by using the proposed recycling schemes with a slight advantage to the DeMix-based approach. The average and the standard deviation of this Kolmogorov-Smirnov distance are divided by a factor of 2-3 compared to the case in which we use only the collection of particles from the last iteration of the SMC sampler.
	
	\begin{table}[!htb]%[!h]
\centering
\begin{footnotesize}
\begin{tabular}{cc|c|c|c|c}%>{\bf}c}
& & No Recycling  & Naive    &  ESS-based  & DeMix\\
& &  						&				Recycling				& Recycling & Recycling\\
\hline
 & $N=50$ &    0.1276    (0.0460) & 0.0727    (0.0234) & 0.0458    (0.0121) & 0.0407    (0.0123) \\
$T=$25 Iter.  & $N=100$ &   0.0835    (0.0224) & 0.0488    (0.0164) & 0.0366    (0.0094) & 0.0315    (0.0089) \\ 
  & $N=200$ &   0.0615    (0.0182) & 0.0353    (0.0103) & 0.0254    (0.0055) & 0.0237    (0.0053) \\
  \hline
  & $N=50$ &   0.1274    (0.0424) & 0.0514    (0.0167) & 0.0357    (0.0096) & 0.0311    (0.0097) \\
$T=$50 Iter.  & $N=100$ &   0.0898    (0.0251) & 0.0379    (0.0117) & 0.0268    (0.0067) & 0.0230    (0.0055) \\
  & $N=200$ &   0.0627    (0.0188) & 0.0267    (0.0068) & 0.0201    (0.0045) & 0.0185    (0.0037) \\
  \hline
  & $N=50$ &   0.1186    (0.0352) & 0.0391    (0.0117) & 0.0315    (0.0066) & 0.0243    (0.0060) \\
$T=$100 Iter.  & $N=100$ &   0.0846    (0.0231) & 0.0288    (0.0079) & 0.0226    (0.0054) & 0.0187    (0.0038) \\
  & $N=200$ &   0.0599    (0.0188) & 0.0216    (0.0054) & 0.0177    (0.0033) & 0.0159    (0.0031)
	\end{tabular}
	\end{footnotesize}

\caption{Comparison of recycling schemes for the accuracy to approximate the posterior distribution $p(\theta_1|\Data)$ in terms of  the Kolmogorov-Smirnov distance (mean and standard deviation in parentheses) for Model 2 with $\nu=0.2$.}
%		\label{TableKS}
\label{tablechap2KSdistanceSmallDegree}
	\end{table}

%%%%%%%%%%%%%%%%%%%%%%%%%%%%%%%%%%%%%%%%%%%%%%%%%%%%%%%%%%%%%%%%%%%%%%%%%%
\subsection{Penalized regression model with count data}
%%%%%%%%%%%%%%%%%%%%%%%%%%%%%%%%%%%%%%%%%%%%%%%%%%%%%%%%%%%%%%%%%%%%%%%%%%

In this section, we illustrate how the proposed SMC sampler can be efficiently used  in penalized regression models with particular focus on counting process observations. Sparse regression analysis initially studied in the context of penalized least squares or likelihood has gained increasing popularity since the seminal paper on the LASSO \cite{Tibshirani:1996wb}. Since this work, many approaches under both frequentist and Bayesian have been proposed to extend these sparsity inducing regression frameworks.

In a frequentist setting the most common choice sparsity inducing penalty is the $\lone$-regularization for the regression coefficients $\Coefficient \in \reel^p$ and it is known as LASSO, with penalty term $\gamma \sum_{i=1}^p |\beta_i|$. Under a Bayesian modelling paradigm, in which the regression coefficients are treated as a random vector, one may recover the LASSO estimates from the maximum a posteriori (MAP) point estimator of the coefficients via a choice of prior on the coefficients given by the multivariate Laplace distribution, $p(\Coefficient) \propto  exp(- \gamma \sum_{i=1}^p |\beta_i|)$.
 
A limitation in this approach is the use of identical penalization on each regression coefficient. This can lead to unacceptable bias in the resulting estimates \cite{Fan:2001ug}. Indeed, the classical $\lone$-regularization can lead to an over-shrinkage of large regression coefficients even in the presence of many zeros. This has resulted in sparsity-inducing non-convex penalties that use different penalty coefficients on each regression coefficient, i.e. $\sum_{i=1}^p \gamma_i |\beta_i|$  have been proposed, as have grouping regularization constraints, see adaptive and sequential estimation approaches in \cite{Zou:2006wm,Lee:2012vx,Candes:2008ua,Chartrand:2008ti}. Alternative non-convex approaches include the bridge regression framework, i.e. $\gamma \sum_{i=1}^p |\beta_i|^q$ with $q\in (0,2)$ which is obtained using the exponential power (EP) distribution:
\begin{equation}\label{DefinitionExponentialPower}
f(\Coefficient;\gamma,q)=\prod_{i=1}^{\NbCoefficient}\frac{q}{2\gamma\Gamma(1/q)} \exp \left( -\left| \frac{\OneCoefficient_i}{\gamma} \right|^q \right),
\end{equation}
which leads to the $\lqreg$-regularization problem \cite{Polson:2011wb,LeThuNguyen:2013vl}. Compared to previous non-convex prior, the latter possesses the advantage of not introducing additional tuning variables that need to be selected.

More specifically, in this section, we address the challenging regression problem for which the proposed strategies can be used as an efficient solution in finding the relationship between continuous input variables and count data as response. The likelihood is given by Poisson distribution defined as:

			\begin{equation}
				\OneData_{\RowMatrix } \sim {\cal P}o(\OneData_{\RowMatrix }|\OneMeanY_{\RowMatrix }) \qquad \text{with } \qquad \OneMeanY_{\RowMatrix }=\exp\left (\OneCoefficient_{0}+\sum_{\ColumnMatrix=1}^{\NbCoefficient}\OneCoefficient_{\ColumnMatrix}\DesignMatrix_{k}^j\left(\Covariates_{\RowMatrix ,\ColumnMatrix }\right)\right ) 
			\end{equation}	
			and $\DesignMatrix_{k}^j\left(\cdot\right)$ corresponds to the basis function used in the regression.

			For the experiments, the true $\DimObs=100$ observations have been generated with regression coefficients set to zeros except $\OneCoefficient_0=1$, $\OneCoefficient_{2}=1.5$, $\OneCoefficient_{4}=-2$, $\OneCoefficient_{6}=1$, $\OneCoefficient_{7}=-2$ and $\OneCoefficient_{9}=1.2$. For the basis function, a Gaussian kernel defined as $\designmatrix_{i,\ColumnMatrix}(\Covariates_{i})=\exp \left \{ \frac{\| (\Covariates_{i}  -\vect{c}_j\|_{2}^{2}}{r_j^{2}}\right \}$ 				
			with $11$ equally spaced centers $c_{\ColumnMatrix}$  with \Correct{the same scale parameters  $r_{\ColumnMatrix}=r=0.5$ have been used}. 
			
			The dimension of the parameter vector $\State$ to estimate is $13$: 12 coefficients in $\Coefficient$ and $\gamma \sim {\cal IG}(2,1.3)$ which corresponds to the unknown scale parameter of the EP distribution defined in Eq. (\ref{DefinitionExponentialPower}). We have chosen $\NbMCMCmove=5$ and $B=6$ for the adaptive MWG (summarized in Algo. \ref{Algo_AMWG}) used in the SMC sampler as forward kernel.  
	
Fig. \ref{fig:Chap4:CountData:FittingCurve} illustrates that the resulting mean prediction \Correct{curves} (and associated confidence \Correct{intervals}) obtained by using the proposed SMC sampler. The true curve is always within the confidence region which clearly shows the ability of our algorithm to give a good prediction of the functional relationship between the input and output variables. One advantage of Bayesian approach (vs optimization technique like LASSO) is the ability to provide a posterior confidence region for the Bayesian estimates.
	
	\begin{figure}[!h]
			\centering
			\includegraphics[width=0.65\textwidth]{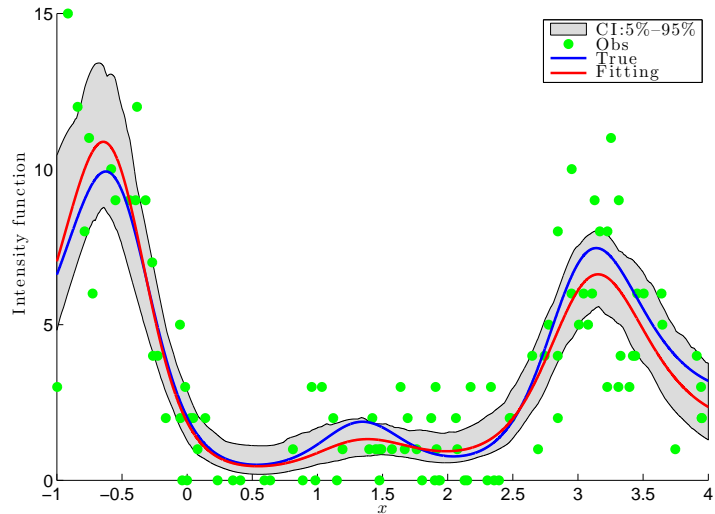}			
			\caption{Regression with count data [Prior: EP  $q=0.5$] by using Demix recycling scheme: true function in blue - observed responses in green circles - posterior mean from SMC sampler in red and confidence region in gray $5\%$ to $95\%$ percentiles. }
			\label{fig:Chap4:CountData:FittingCurve}
		\end{figure}
		
			Table \ref{Tab:Chap4:CountData:TrueModel_FixedY_Lq=0.5_pY}  shows the variance of the SMC \Correct{sampler} estimate of the normalizing constant of the target posterior distribution, $p({\bm y})$, obtained by using the proposed adaptive cooling schedule and the linear cooling schedule. As in previous examples, the variance obtained by using the proposed adaptive cooling schedule is significantly lower than the one obtained with a linear cooling schedule. This normalizing constant is a quantity of great interest for example whenever the best suitable basis function and/or distribution of the response have to be selected from a dictionary of different possible choices. The ability of the proposed adaptive cooling strategy to provide an estimator with smaller variance is clearly an important benefit in such cases.

		\begin{table}[!h]
			\centering
			\begin{footnotesize}
			\hspace*{-1cm}
			\begin{tabular}{ll|rcr|rcr}		
											&		&	 \multicolumn{3}{c|}{	Linear cooling schedule	}		&	\multicolumn{3}{c}{	Proposed Adaptive }      			\\	
											& & & & &\multicolumn{3}{c}{cooling schedule}      			\\	\hline 
											&	$\NbParticles=50$	&	-221.0070	&	$\pm$	&	124.3479	&	-203.3146	&	$\pm$	&	0.8215	\\		
			$\NbIterSMC=50$	&	$\NbParticles=200$	&	-211.8227	&	$\pm$	&	34.1198	&	-203.2687	&	$\pm$	&	0.2325	\\	\hline	%\hline
											&	$\NbParticles=50$	&	-211.4072	&	$\pm$	&	28.2070	&	-203.1100	&	$\pm$	&	0.4598	\\		
			$\NbIterSMC=100$	&	$\NbParticles=200$	&	-206.1121	&	$\pm$	&	10.0980	&	-203.1244	&	$\pm$	&	0.0698	\\	\hline	%\hline
											&	$\NbParticles=50$	&	-206.3567	&	$\pm$	&	13.6623	&	-202.9167	&	$\pm$	&	0.1627	\\		
			$\NbIterSMC=200$	&	$\NbParticles=200$	&	-204.1015	&	$\pm$	&	3.8083	&	-203.0268	&	$\pm$	&	0.0530	\\

			\end{tabular}
			\end{footnotesize}
			 \caption{The estimation of the marginal likelihood log $p({\bm y}|{\cal M}_1)$ (mean $\pm$ variance) in count data regression under model $\model_{1}$ [Prior: EP with $q=0.5$].}
			\label{Tab:Chap4:CountData:TrueModel_FixedY_Lq=0.5_pY}%\vspace{-0.5cm}
\end{table}

		We now investigate the performance of the SMC sampler to correctly estimate the unknown coefficients of regression and as a consequence give some accurate prediction of the functional relationship  between the input and output variables. Table \ref{Tab:Chap4:CountData:TrueModel_Fixed_Lq_VarFittingCurve} clearly demonstrates that  our proposed scheme (ESS and DeMix) outperforms \Correct{ the two other schemes} that were used in these simulations  in terms of the stability of the posterior mean estimator. Again, the DeMix recycling scheme achieved a slightly better result than the ESS-based strategy in terms of variance.

		\begin{table}[!htb]
			\centering
			\begin{footnotesize}
			\hspace*{-1cm}
				\begin{tabular}{ll|c|c|c|c}
													&		&	No Recycling	&	Na\"ive	&	ESS-based	&	DeMix	\\	
													&		&		&	Recycling	&	Recycling	&	Recycling	\\	\hline %\hline
													&	$\NbParticles=50$	&	0.00318883	&	0.00318883	&	0.00216641	&	0.00193355	\\		
					$\NbIterSMC=50$	&	$\NbParticles=200$	&	0.00069063	&	0.00069063	&	0.00050476	&	0.00046064	\\	\hline	%\hline
													&	$\NbParticles=50$	&	0.00268185	&	0.00268197	&	0.00148822	&	0.00130908	\\		
					$\NbIterSMC=100$	&	$\NbParticles=200$	&	0.00079279	&	0.00079267	&	0.00048881	&	0.00040874	\\	\hline%	\hline
													&	$\NbParticles=50$	&	0.00308443	&	0.00307905	&	0.00170143	&	0.00149617	\\		
					$\NbIterSMC=200$	&	$\NbParticles=200$	&	0.00094644	&	0.00094534	&	0.00051125	&	0.00041064	\\		
				\end{tabular}
			\end{footnotesize}
			\caption{Variance of approximated curve  in count data regression  [Prior: EP with $q=0.5$].}
			\label{Tab:Chap4:CountData:TrueModel_Fixed_Lq_VarFittingCurve}
		\end{table}

%%%%%%%%%%%%%%%%%%%%%%%%%%%%%%%%%%%%%%%%%%%%%%%%%%%%%%%%%%%%%%%%%%%%%%%%%%
%%%%%%%%%%%%%%%%%%%%%%%%%%%%%%%%%%%%%%%%%%%%%%%%%%%%%%%%%%%%%%%%%%%%%%%%%%
\section{Conclusion}\label{ConclusionPaper}
%%%%%%%%%%%%%%%%%%%%%%%%%%%%%%%%%%%%%%%%%%%%%%%%%%%%%%%%%%%%%%%%%%%%%%%%%%
%%%%%%%%%%%%%%%%%%%%%%%%%%%%%%%%%%%%%%%%%%%%%%%%%%%%%%%%%%%%%%%%%%%%%%%%%%
In this paper, we discuss the use of SMC samplers for Bayesian inference. A \Correct{simple form of} the asymptotic variances for the SMC sampler \Correct{estimator} is derived under some assumptions. From this expression, a novel criterion to optimize is described in order to automatically and adaptively decide the cooling schedule of the algorithm. The proposed strategy consists in finding the evolution of the temperature along the SMC iterations that will optimize the (asymptotic) variance of the estimator of the normalizing constant of the target distribution. Furthermore, we propose two different approaches (ESS and DeMix) that recycle all past simulated particles for the final approximation of the posterior distribution. Numerical simulations clearly show that significant improvement can be obtained by using these different propositions into the SMC samplers.

\appendix[Proof of Proposition \ref{Proposition_Convergence_RB}]
\label{Proof_Proposition_Convergence_RB}

In this appendix, we present the proof of Proposition \ref{Proposition_Convergence_RB} related to the asymptotic variance of the SMC sampler estimator when resampling is performed before the sampling step. In \cite{DelMoral2006}, the authors does not study this case since the resampling cannot always be done before the sampling. In particular, as discussed in Section \ref{AlgorithmSettingsChap1} we can do the resampling before the sampling when the weights does not depend on the current value of the particle as it is the case when the backward kernel is the one used in this proposition.

\subsection{\Correct{On the estimation of an expectation}}

This results is quite straightforward to obtain by using classical Monte-Carlo results since we use a perfectly mixing kernel, the particles are (asymptotically) drawn at the $\TimeIndex$-th iteration, for $i=1,\ldots,N$:
\begin{equation}
\state_\TimeIndex^{(i)} \iid \pi_{\TimeIndex}(\cdot)
\end{equation}
which leads to the following particle estimate of the expectation:
\begin{equation}
\Expec_{\pi_\TimeIndex^N} (\varphi) = \frac{1}{N} \sum\limits_{i=1}^{N} \varphi(\state^{(i)}_{\TimeIndex})
\end{equation}
All particles are equally weighted since we have performed the resampling before the sampling step. As a consequence, we obtain:
\begin{equation}
							N^{\frac{1}{2}}\left \{\Expec_{\pi^{N}_{\TimeIndex}} (\varphi)-\Expec_{\pi_{\TimeIndex}}(\varphi)\right \}\Rightarrow \mathcal{N} (0,\sigma_{SMC,\TimeIndex}^{2}(\varphi))
						\end{equation}	
						with $\sigma_{SMC,\TimeIndex}^{2}(\varphi)$ the variance of $\varphi(\state)$ with respect to $\pi_{\TimeIndex}$, i.e. $\sigma_{SMC,\TimeIndex}^{2}(\varphi)=\Var_{\pi_{\TimeIndex}}(\varphi(\state))$.

 \subsection{\Correct{On the estimation of the normalizing constant}}

In this section, we will derive the asymptotic variance related to the estimator of the normalizing constant. Let us firstly study the estimate of the ratio of normalizing constant,$\NormConst_{\TimeIndex }/\NormConst_{\TimeIndex-1 }$, defined in Eq. (\ref{Eq_Approx_Ratio_NormalizingConstant}) which  is given in the context of the proposition \ref{Proof_Proposition_Convergence_RB} by:
\begin{eqnarray}\label{EstimatorNormProof3}
\widehat{\dfrac{\NormConst_{\TimeIndex }}{\NormConst_{\TimeIndex -1}}}&=&\sum\limits _{\ParsIndexSMC =1}^{\NbParticles } \NormISWeight  _{\TimeIndex -1}^{(\ParsIndexSMC )} \UnNorIncreWeight _{\TimeIndex } (\State _{\TimeIndex -1}^{(\ParsIndexSMC )},\State _{\TimeIndex }^{(\ParsIndexSMC )})=\frac{1}{\NbParticles} \sum\limits _{\ParsIndexSMC =1}^{\NbParticles }  \frac{\UnNorTarget _{\TimeIndex}(\state  _{\TimeIndex -1}^{(\ParsIndexSMC )})}{\UnNorTarget _{\TimeIndex -1}(\state  _{\TimeIndex -1}^{(\ParsIndexSMC )})}
\end{eqnarray}
since the particles are equally weighted due to the resampling  before the sampling and the unnormalized incremental weights are defined in Eq. (\ref{Eq_UnNormalized_IncrementWeights1}) when the backward kernel in Eq. (\ref{Eq_SubOptimal_BackwardKernel}) is used. Moreover, owing to the perfect mixing assumption, we have:
 for $i=1,\ldots,N$:
\begin{equation}\label{ParticleDistribProof3}
\state_{\TimeIndex-1}^{(i)} \iid \pi_{\TimeIndex-1}(\cdot)
\end{equation}

From (\ref{EstimatorNormProof3}) and (\ref{ParticleDistribProof3}), the unbiasedness of this estimator is obvious:
\begin{eqnarray}\label{ProofUnbiasedBeforeResampling}
\Expec_{\pi_{\TimeIndex-1}} \left[\widehat{\dfrac{\NormConst_{\TimeIndex }}{\NormConst_{\TimeIndex -1}}} \right] & = & \int  \frac{\UnNorTarget _{\TimeIndex}(\state  _{\TimeIndex -1})}{\UnNorTarget _{\TimeIndex -1}(\state  _{\TimeIndex -1})} \pi_{\TimeIndex-1}(\state  _{\TimeIndex -1}) d\state  _{\TimeIndex -1}  \\
& = & \int  \frac{\UnNorTarget _{\TimeIndex}(\state  _{\TimeIndex -1})}{\UnNorTarget _{\TimeIndex -1}(\state  _{\TimeIndex -1})} \frac{\UnNorTarget _{\TimeIndex -1}(\state  _{\TimeIndex -1})}{\NormConst_{\TimeIndex -1}} d\state  _{\TimeIndex -1} =  \dfrac{\NormConst_{\TimeIndex }}{\NormConst_{\TimeIndex -1}} \nonumber
\end{eqnarray}
Let us now study the variance of this estimator:
\begin{eqnarray}
\Var \left(\widehat{\dfrac{\NormConst_{\TimeIndex }}{\NormConst_{\TimeIndex -1}}}\right)& = & \frac{1}{N} \sum\limits _{\ParsIndexSMC =1}^{\NbParticles }  \Var\left(\frac{\UnNorTarget _{\TimeIndex}(\state  _{\TimeIndex -1})}{\UnNorTarget _{\TimeIndex -1}(\state  _{\TimeIndex -1})}\right) \\
& = & \frac{1}{N} \left\{ \Expec_{\pi_{\TimeIndex-1}} \left[ \frac{\UnNorTarget^2 _{\TimeIndex}(\state  _{\TimeIndex -1})}{\UnNorTarget ^2_{\TimeIndex -1}(\state  _{\TimeIndex -1})}\right] - \Expec^2_{\pi_{\TimeIndex-1}} \left[ \frac{\UnNorTarget _{\TimeIndex}(\state  _{\TimeIndex -1})}{\UnNorTarget _{\TimeIndex -1}(\state  _{\TimeIndex -1})}\right] \right\} \nonumber
\end{eqnarray}
In this expression, the mean has already been derived in Eq. (\ref{ProofUnbiasedBeforeResampling}) and the second moment can be written as
\begin{eqnarray}
\Expec_{\pi_{\TimeIndex-1}} \left[ \frac{\UnNorTarget^2 _{\TimeIndex}(\state  _{\TimeIndex -1})}{\UnNorTarget ^2_{\TimeIndex -1}(\state  _{\TimeIndex -1})}\right] & = & \int \pi_{\TimeIndex-1}(\state  _{\TimeIndex -1}) \frac{\UnNorTarget^2 _{\TimeIndex}(\state  _{\TimeIndex -1})}{\UnNorTarget ^2_{\TimeIndex -1}(\state  _{\TimeIndex -1})} d\state  _{\TimeIndex -1} \nonumber \\
& = & \dfrac{\NormConst^2_{\TimeIndex }}{\NormConst^2_{\TimeIndex -1}} \int \frac{\pi_{\TimeIndex}^2(\state  _{\TimeIndex -1})}{\pi_{\TimeIndex-1}(\state  _{\TimeIndex -1})}d\state  _{\TimeIndex -1}
\end{eqnarray}
which give the following expression for the variance:
\begin{equation}\label{ProofVarianceRatioNormBR}
\Var \left(\widehat{\dfrac{\NormConst_{\TimeIndex }}{\NormConst_{\TimeIndex -1}}}\right)=\frac{1}{N} \left( \dfrac{\NormConst_{\TimeIndex }}{\NormConst_{\TimeIndex -1}} \right)^2 \left[ \int \frac{\pi_{\TimeIndex}^2(\state  _{\TimeIndex -1})}{\pi_{\TimeIndex-1}(\state  _{\TimeIndex -1})}d\state  _{\TimeIndex -1} -1 \right]
\end{equation}

In the results given in Proposition \ref{Proof_Proposition_Convergence_RB}, we want to have the variance of the log of the normalizing constant at time $\TimeIndex$ which can be rewritten using Eq. (\ref{Eq_Approx_Ratio_NormalizingConstantZ0Zt}) as
\begin{equation}\label{ProofDecompoWithSumLog}
\log\left (\widehat{\dfrac{\NormConst_{\TimeIndex}}{\NormConst_{1}}}\right ) = \sum_{n=2}^{\TimeIndex} \log \left (\widehat{\dfrac{\NormConst_{n}}{\NormConst_{n-1}}}\right ) 
\end{equation}
From this expression, we have to obtain the variance of the $\log$ ratio of the normalizing constant. This term can be obtained by using the \textit{delta method}    \cite{BergerStatisticalInference} that states that if
\begin{equation}
\NbParticles^{\frac{1}{2}} \left(X_n- \mu \right) \Rightarrow \mathcal{N} (0,\sigma^{2})
\end{equation}
then for a given function $g$ and a specific value of $\mu$ (by assuming that $g'(\mu)$ exists and is not 0)
\begin{equation}
\NbParticles^{\frac{1}{2}} \left(g(X_n)- g(\mu) \right) \Rightarrow \mathcal{N} (0,\sigma^{2} \left[g'(\mu) \right]^2)
\end{equation}

By using this delta method and Eqs. (\ref{ProofVarianceRatioNormBR}) and (\ref{ProofDecompoWithSumLog}), we finally obtain the result presented in Proposition  \ref{Proposition_Convergence_RB}, i.e.:
\begin{equation}
							N^{\frac{1}{2}} \left \{ \log\left (\widehat{\dfrac{\NormConst_{\TimeIndex}}{\NormConst_{1}}}\right )-\log \left (\dfrac{\NormConst_{\TimeIndex}}{\NormConst_{1}}\right ) \right \} \Rightarrow \mathcal{N} (0,\sigma^{2}_{SMC,\TimeIndex})
						\end{equation}
						
						with 
						\begin{equation}
							 	\begin{aligned}
							 		\sigma^{2}_{SMC,\TimeIndex}&=\int \dfrac{\pi_{2}^{2}(\state_{1})}{\eta_{1}(\state_{1})}d\state_{1}&+\sum\limits_{k=2}^{\TimeIndex-1} \int \dfrac{\pi_{k+1}^{2}(\state_{k})}{\pi_{k}(\state_{k})}d\state_{k} -(\TimeIndex-1)
							 	\end{aligned}
						\end{equation}

\bibliographystyle{IEEEtran}
\bibliography{references,bibliography_Tracking,bibliography_Thesis,bibliography_Regression,bibliography_OtherChap2,referencesGWP}

\end{document}